\newcommand{\pspace}{\Omega}
\newcommand{\gambles}[1][\pspace]{\mathcal{L}(#1)}
\newcommand{\ud}{\,{\rm d}}
\newcommand{\SetR}{\mathbb{R}}
\newcommand{\SetN}{\mathbb{N}}
\DeclareMathOperator{\opt}{opt}
\newcommand{\partition}{\mathcal{A}}
\newcommand{\probabilities}[1][\pspace]{\mathcal{P}(#1)}
\newcommand{\credal}{\mathcal{M}}
\renewcommand{\vec}[1]{\underline{#1}}
\DeclareMathOperator{\cohull}{co}
\DeclareMathOperator{\ext}{ext}
\DeclareMathOperator{\closure}{cl}
\newtheorem{theorem}{Theorem}
\newtheorem{lemma}[theorem]{Lemma}
\newtheorem{corollary}[theorem]{Corollary}
\newtheorem{definition}[theorem]{Definition}
\author{Matthias C. M. Troffaes}
\address{Durham University, Dept. of Mathematical Sciences, Science Laboratories, South Road, Durham DH1 3LE, United Kingdom}
\email{matthias.troffaes@gmail.com}
\title{Finite Approximations To Coherent Choice}
\keywords{decision making, E-admissibility, maximality, numerical analysis, lower prevision, sensitivity analysis}
\begin{document}

\begin{abstract}
  This paper studies and bounds the effects of approximating loss
  functions and credal sets on choice functions, under very
  weak assumptions. In particular, the credal set is assumed to be
  neither convex nor closed. The main result is that the effects of
  approximation can be bounded, although in general, approximation of
  the credal set may not always be practically possible. In case of
  pairwise choice, I demonstrate how the situation can be improved by
  showing that only approximations of the extreme points of the closure of the convex
  hull of the credal set need to be taken into account, as expected.
\end{abstract}

\maketitle

\thispagestyle{fancy}

\section{Introduction}

Classical decision theory tells a decision maker to choose that option which
maximises his expected utility. A generalisation of this principle is compelling when the probabilities and utilities
relevant to the problem are not well known. 
Choice functions are one such generalisation,
and select a set of optimal options:
instead of
pointing to a single solution based on possibly wrong assumptions,
choice functions provide a set of optimal options. The decision maker can then
investigate further if the set is too large, or not, if for
instance the optimal set is a singleton, or if a single option from
the set stands out from the rest by other arguments.

However, in modelling decision problems, we often afford ourselves the
luxury of infinite spaces and infinite sets, making those problems sometimes
hard to solve analytically. In such cases we must resort to computers,
and these cannot handle
random variables on infinite spaces, let alone arbitrary infinite sets of probabilities. Hence,
in that case we must approximate our infinite sets by finite ones.
By taking the finite sets sufficiently large,
hopefully the approximation reflects the true result accurately.
This paper confirms this intuition when modelling choice
functions induced by arbitrary (not necessarily convex) sets of
probabilities and a single cardinal utility,
extending similar results known in classical decision theory
\cite{1968:fishburn::sensitivity,1969:pierce::sensitivity:prior}. 

The paper is organised as follows. Section~\ref{sec:choice} introduces
notation, and briefly reviews the theory of coherent choice functions
and their role in decision theory. In
Section~\ref{sec:approx:gambles:probs:prevs} the building blocks for a
theory of approximation are introduced, along with some useful results
on what they imply for loss functions, sets of probabilities, and
expected utility. The main part of the paper begins in
Section~\ref{sec:approx:choice}, studying and bounding the effects of
approximation on coherent choice functions.
Section~\ref{sec:approx:pairwise:choice} improves the results of the
previous section for pairwise choice. Section~\ref{sec:conclusion}
concludes the paper. Some essential but technical results on
approximating the standard simplex in $\SetR^n$ are deferred to an
appendix.

\section{Choice Functions}
\label{sec:choice}

Let $\pspace$ denote an arbitrary set of states. Bounded random
quantities on $\pspace$, i.e.\ bounded maps from $\pspace$ to $\SetR$,
are also called \emph{gambles} \cite{1991:walley}, and will be
denoted by $f$, $g$, \dots\ $\gambles$ denotes the set of all gambles
on $\pspace$. Finitely additive probability measures, or briefly
\emph{probability charges} \cite{1983:rao}, are denoted by $P$,
$Q$, \dots\ and $\probabilities$ denotes the set of all probability
charges on the power set $\wp(\pspace)$ of $\pspace$.

In a decision problem, we desire to choose an optimal option $d$ from
a set $D$ of options. 
Choosing $d$ induces an uncertain reward $r$ from a set $R$ of rewards,
with probability charge $\mu_d(\cdot|w)$ over $\wp(R)$, depending on the outcome of the uncertain
state $w\in\pspace$. 
For each $w\in\pspace$, $\mu_d(\cdot|w)$ is a \emph{lottery} over $R$, and
as a function of $w$, $\mu_d(\cdot|\cdot)\colon w\mapsto \mu_d(\cdot|w)$ is a
\emph{horse lottery} or \emph{act}.

If we model our belief about states and rewards
by a probability charge $P$ on $\wp(\pspace)$ and a state dependent utility function $U(\cdot|w)$ on $R$,
then utility theory \cite{1944:neumann,anscombe:1963,1974:definetti} tells us
to choose a decision $d$ which maximises the expected utility, or prevision:
\begin{align*}
  E(d)
  &=\int_\pspace \left(\int_R U(r|w) \ud \mu_d(r|w)\right) \ud P(w)\\
  &=\int_\pspace f_d(w)\ud P(w)
\end{align*}
where $f_d(w)=\int_R U(r|w) \ud \mu_d(r|w)$ is the gamble associated with
decision $d$, and the integrals are Dunford integrals \cite{1983:rao}. For simplicity, in this paper, we assume $U(r|w)$ to be
bounded, i.e. $$\sup_{r,w}U(r|w)-\inf_{r,w}U(r|w)<+\infty$$
Among other things, this ensures that relative approximation 
can be defined, as in
Section~\ref{sec:approx:gambles:probs:prevs},
without technical complications.

A decision which maximises expected utility is called a \emph{Bayes
  decision} for the decision problem $(\pspace,D,P,U)$.

However, if we are not sure about the probability of all events and the utility of all rewards, a
more reliable design is to use a family $(P_\alpha,U_\alpha)_{\alpha\in\aleph}$ of probability-utility pairs (where $\aleph$ is an arbitrary index set),
and to elicit from $D$ those options which maximise expected
utility with respect to at least one of the pairs $(P_\alpha,U_\alpha)$.
First, for each $\alpha\in\aleph$, let
\begin{equation*}
  E_\alpha(d)
  =\int_\pspace f^\alpha_d(w)\ud P_\alpha(w)
\end{equation*}
where $f^\alpha_d(w)=\int_R U_\alpha(r|w) \ud \mu_d(r|w)$ is the gamble associated with
decision $d$ and model $\alpha\in\aleph$. Then we define:
\begin{definition}\label{def:optimality}
  A decision $d\in D$ is called an \emph{optimal decision} for the decision
  problem $(\pspace,D,(P_\alpha,U_\alpha)_{\alpha\in\aleph})$ if $d$ belongs to the set
  \begin{align*}
    \opt(\pspace,D,(P_\alpha,U_\alpha)_{\alpha\in\aleph})
    &=\left\{
      d\in D\colon
      (\exists\alpha\in\aleph)(\forall e\in D)
      (E_\alpha(d)\ge E_\alpha(e))
    \right\}
    \\
    &=\left\{
      d\in D\colon
      (\exists\alpha\in\aleph)
      \left(E_\alpha(d)=\sup_{e\in D}E_\alpha(e)\right)
    \right\}
  \end{align*}
\end{definition}

As such, the operator $\opt$ selects a \emph{set} of optimal decisions, namely all
decisions which are Bayes with respect to
$(\pspace,D,P_\alpha,U_\alpha)$ for at least one $\alpha\in\aleph$.
Such an operator is called a \emph{choice function} or \emph{optimality
  operator} \cite{2005:decooman:troffaes::dynprog:impgain,2007:troffaes:decision:intro}.

In case
$(P_\alpha,U_\alpha)_{\alpha\in\aleph}=\credal\times\mathcal{U}$ for some
convex sets $\credal$ and $\mathcal{U}$, optimality as defined above
is also called \emph{E-admissibility} \cite[Sec.~4.8]{1980:levi}.

There are many ways to define a choice function starting from a
set $(P_\alpha,U_\alpha)_{\alpha\in\aleph}$ (see
\cite{1980:levi,1995:seidenfeld,1991:walley,2004:seidenfeld::rubinesque,2007:troffaes:decision:intro}).
The one in
Definition~\ref{def:optimality} satisfies an
interesting set of axioms \cite{2004:seidenfeld::rubinesque,2007:seidenfeld::choice:representation}, and is the subject
of a representation theorem in case utility is precise and
state independent (i.e.\ if $U_\alpha(r|w)$ depends neither on
$\alpha$ nor on $w$) and $\pspace$ is finite (for infinite $\pspace$ the representation theorem is subject to additional constraints, which preclude merely finitely additive probabilities over $\pspace$) \cite{2007:seidenfeld::choice:representation}.

For the sake of simplicity, 
we shall only be concerned about
decision problems with precise and state independent utility
functions, i.e.\ when
$(P_\alpha,U_\alpha)_{\alpha\in\aleph}=\credal\times\{U\}$ with $U\colon R\to \SetR$
a bounded state independent utility over $R$ and
\begin{equation*}
  \credal=\{P_\alpha\colon\alpha\in\aleph\}
\end{equation*}
The set $\credal$ is called a \emph{credal set} as it represents our belief about $w\in\pspace$.
We can identify $\credal$ itself as index set, and write
\begin{equation*}
  E_P(d)
  =\int_\pspace f_d(w)\ud P(w)
\end{equation*}
with $f_d(w)=\int_R U(r) \ud \mu_d(r|w)$, for any $P\in\credal$.

Finally, defining the loss function $L\colon D\times\pspace\to\SetR$ as
$L(d,w)=-f_d(w)$, the expected value $E_P(d)$ is uniquely determined by
$P$ and $L$ alone: we need not be concerned explicitly with $R$,
$\mu_d(r|w)$, and $U(r)$.

\section{Approximate Gambles, Probabilities, and Previsions}
\label{sec:approx:gambles:probs:prevs}

Let $\partition=\{A_1,\dots,A_n\}$ denote a finite partition of
$\pspace$. As we approximate $\pspace$ by the finite set $\partition$,
we also need to approximate decisions, gambles, and probability charges on
$\pspace$.

Let $\epsilon\ge 0$.
For a gamble $f$ in $\gambles$ and a gamble $\hat{f}$ in
$\gambles[\partition]$, we shall write $f\sim_\epsilon\hat{f}$
if
\begin{equation*}
  \max_{A\in\partition}
  \sup_{w\in A}
  \left|
    f(w)-\hat{f}(A)
  \right|
  \le[\sup f-\inf f]\epsilon
\end{equation*}
Note that $f\sim_\epsilon\hat{f}$ implies $af+b\sim_\epsilon
a\hat{f}+b$, for any real numbers $a$ and $b$, $a>0$. Therefore, the relation
$\sim_\epsilon$ is invariant with respect to positive linear
transformations of utility: it only depends on our preferences over lotteries, and not on our particular choice of
utility scale.

For a probability charge $P$ in $\probabilities$, and a probability charge
$\hat{P}$ in $\probabilities[\partition]$, we shall write
$P\sim_{\epsilon}\hat{P}$
if
\begin{equation*}
  \sum_{A\in\partition}
  \left|
    P(A)-\hat{P}(A)
  \right|
  \le\epsilon
\end{equation*}
Note that this implies $|P(A)-\hat{P}(A)|\le\epsilon$ for
any $A\in\wp(\partition)$. Also note the differences
between the definitions of $\sim_\epsilon$ for gambles and bounded
charges.

For a loss function $L$ on $D\times\pspace$ and a loss function
$\hat{L}$ on $D\times\partition$ we write
$L\sim_\epsilon\hat{L}$ if for all $d\in D$
\begin{equation*}
  f_d\sim_\epsilon\hat{f}_d
\end{equation*}
(with $f_d(w)=-L(d,w)$ and $\hat{f}_d(A)=-\hat{L}(d,A)$).

For a subset $\credal$ of
$\probabilities$ and a subset $\hat{\credal}$ of
$\probabilities[\partition]$, we write
$\credal\sim_\epsilon\hat{\credal}$ if for every $P$ in $\credal$
there is a $\hat{P}$ in $\hat{\credal}$ such that
$P\sim_\epsilon\hat{P}$, and for every $\hat{P}$ in $\hat{\credal}$ there is a  $P$ in $\credal$ such that
$P\sim_\epsilon\hat{P}$. 

A few useful results about approximations are stated in the next lemmas.

\begin{lemma}\label{lem:approx:gambles}
    Assume that $D$ is finite. Then, for every loss function $L$ on $D\times\pspace$ and every
    $\epsilon>0$, there is a finite partition $\partition$ of
    $\pspace$ and a loss function $\hat{L}$ on $D\times\partition$
    such that
    $L\sim_\epsilon\hat{L}$ and
    $|\partition|\le (1+1/\epsilon)^{|D|}$.
\end{lemma}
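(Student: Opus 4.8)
The plan is to handle each decision's gamble separately by quantising its range, and then to combine the resulting partitions by taking their common refinement. Write $f_d(w)=-L(d,w)$ for the gamble associated with each $d\in D$, and let $r_d=\sup f_d-\inf f_d$ be its (finite, by boundedness) range. The key observation is that the relative nature of $\sim_\epsilon$ means that, to approximate $f_d$ by a gamble that is constant on each cell of a partition, it suffices to slice the interval $[\inf f_d,\sup f_d]$ into pieces of length at most $r_d\epsilon$ and to pull these back through $f_d$ to a partition of $\pspace$.

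Concretely, first I would fix $d$ and set $k=\lceil 1/\epsilon\rceil$. If $r_d=0$ the gamble is constant and the single cell $\{\pspace\}$ suffices, since the right-hand side of $\sim_\epsilon$ is then zero and is met exactly. Otherwise I would partition $[\inf f_d,\sup f_d]$ into $k$ subintervals $I_1,\dots,I_k$ of equal length $r_d/k\le r_d\epsilon$, taken half-open so that they genuinely cover the range, and define $\partition_d=\{f_d^{-1}(I_j):1\le j\le k\}$, discarding empty cells. Since each $I_j$ has length at most $r_d\epsilon$, choosing for $\hat{f}_d$ on the cell $f_d^{-1}(I_j)$ any value in $I_j$ (say its left endpoint) yields $\sup_{w}|f_d(w)-\hat{f}_d|\le r_d\epsilon$ on that cell. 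The cell count satisfies $|\partition_d|\le k=\lceil 1/\epsilon\rceil\le 1+1/\epsilon$, which is the single-gamble version of the claimed bound.

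Next I would take the common refinement $\partition$ of the finitely many partitions $\partition_d$, $d\in D$, again keeping only nonempty cells. Every cell $A\in\partition$ is contained in some cell of each $\partition_d$, so $f_d$ still lies within one interval $I_j$ on $A$; defining $\hat{L}(d,A)=-\hat{f}_d(A)$ with $\hat{f}_d(A)$ chosen in that interval gives $f_d\sim_\epsilon\hat{f}_d$ for every $d$, that is, $L\sim_\epsilon\hat{L}$. For the cardinality bound, each cell of $\partition$ is determined by a choice of one cell from each $\partition_d$, so $|\partition|\le\prod_{d\in D}|\partition_d|\le(1+1/\epsilon)^{|D|}$, where finiteness of $D$ is what makes the product finite.

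The main thing to get right is the bookkeeping that turns the per-gamble slicing into a uniform statement: verifying that $\lceil 1/\epsilon\rceil\le 1+1/\epsilon$, handling the degenerate constant-gamble case where the tolerance is exactly zero, and confirming that the common refinement preserves the per-cell error bound for all $d$ simultaneously rather than for one gamble at a time. None of these steps is deep, but the product bound is only as good as claimed if empty cells are discarded and the half-open slicing is arranged so that the range is genuinely covered, so I would be careful on those points.
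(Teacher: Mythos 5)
Your proposal is correct and follows essentially the same route as the paper's own proof: slice each gamble's range $[\inf f_d,\sup f_d]$ into at most $1+1/\epsilon$ intervals of length at most $[\sup f_d-\inf f_d]\epsilon$, pull them back through $f_d$ to get a per-decision partition, then take the common refinement and bound its cardinality by the product $\prod_{d\in D}|\partition_d|\le(1+1/\epsilon)^{|D|}$. Your explicit treatment of the constant-gamble case $r_d=0$ and of defining $\hat{f}_d$ directly on the refined cells are small points of extra care (the paper glosses over both), but they do not change the argument.
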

\begin{proof}
  Consider any $d$ in $D$, and let $R_d=\sup f_d-\inf f_d$.
  Because $f_d$ is bounded, we can embed the range of $f_d$ in
  $k$ intervals $I_1$, \dots, $I_k$ of length $R_d\epsilon$, say
  \begin{equation*}
    [\inf f_d,\inf f_d+R_d\epsilon),\,
    [\inf f_d+R_d\epsilon,\inf f_d+2 R_d\epsilon),
    \dots, 
    [\inf f_d+(k-1)R_d\epsilon,\inf f_d+kR_d\epsilon)
  \end{equation*}
  with $k$ such that $\sup f_d\in I_k$. Therefore, $\inf
  f_d+(k-1)R_d\epsilon\le\sup f_d<\inf f_d+kR_d\epsilon$ and hence
  $k-1\le 1/\epsilon<k$. Observe that $k$ is independent of
  $d\in D$.

  The sets $A_1$,
  \dots, $A_k$ defined by
  \begin{equation*}
    A_j=f_d^{-1}(I_j)
  \end{equation*}
  form a finite partition $\partition_d=\{A_j\colon
  A_j\neq\emptyset\}$ of cardinality $|\partition_d|\le k\le 1+1/\epsilon$
  and the gamble $\hat{f}_d\in\gambles[\partition_d]$ defined by
  \begin{equation*}
    \hat{f}_d(A_i)=\inf_{w\in A_i}f_d(w)
  \end{equation*}
  satisfies
  \begin{align*}
    \sup_{w\in A_j}
    \left|
      f_d(w)-\hat{f}_d(A_j)
    \right|
    &
    =
    \sup_{f_d(w)\in I_j}
    \left|
      f_d(w)-\inf_{f_d(w)\in I_j}f_d(w)
    \right|
    \\
    &
    \le
    \sup I_j - \inf I_j
    =
    R_d\epsilon
  \end{align*}
  for all $A_j\in\partition_d$; hence $f_d\sim_\epsilon\hat{f}_d$.
  Defining $\hat{L}(d,A)=-\hat{f}_d(A)$ for all $d\in D$, we have
  $L\sim_\epsilon\hat{L}$.

  The finite collection of partitions $\{\partition_d\colon
  d\in D\}$ has a smallest common refinement $\partition$. Since each
  $\partition_d$ has no more than $1+1/\epsilon$ elements,
  $\partition$ has no more than $(1+1/\epsilon)^{|D|}$ elements.
  Indeed, two partitions of cardinalities $k_1$ and $k_2$ respectively have a
  smallest common refinement of cardinality no more than $k_1k_2$. 
  By induction, $n$ partitions of cardinalities $k_1$, \dots,
  $k_n$ have a smallest common refinement of cardinality no more than $\prod_{j=1}^n
  k_j$ and hence,
  \begin{equation*}
    |\partition|\le(1+1/\epsilon)^{|D|}
  \end{equation*}
\end{proof}

Table~\ref{tab:partition} lists upper bounds on the
size of the partition, to ensure $L\sim_\epsilon\hat{L}$, for various
values of $\epsilon$ and $|D|$, according to Lemma~\ref{lem:approx:gambles}.
\begin{table}
  \begin{tabular}{r|rrrrr}
    &      $\epsilon$: \\
    &      $0.2$ & $0.1$ & $0.05$ & $0.02$ & $0.01$ \\
    \hline
    $|D|$:
    $2$  & $1.6$  & $2.1$  & $2.6$  & $3.4$  & $4.0$ \\
    $4$  & $3.1$  & $4.2$  & $5.3$  & $6.8$  & $8.0$ \\
    $8$  & $6.2$  & $8.3$  & $10.6$  & $13.7$  & $16.0$ \\
    $16$  & $12.5$  & $16.7$  & $21.2$  & $27.3$  & $32.1$ \\
    $32$  & $24.9$  & $33.3$  & $42.3$  & $54.6$  & $64.1$
  \end{tabular}
  \caption{Upper bound on $\log_{10}(|\partition|)$, i.e. the logarithm of the cardinality of the finite partition $\partition$ for various values of precision $\epsilon>0$ and number of decisions (see Lemma~\ref{lem:approx:gambles}).}
  \label{tab:partition}
\end{table}

Let $\binom{a}{b}$ be the binomial coefficient, defined for all real
numbers $a\ge b\ge 0$ by
$$\binom{a}{b}=\frac{\Gamma(a+1)}{\Gamma(b+1)\Gamma(a-b+1)}$$ with
$\Gamma$ the Gamma function.

\begin{lemma}\label{lem:approx:probabilities}
  For every subset $\credal$ of $\probabilities$, every $\delta>0$,
  and every finite partition $\partition$ of $\pspace$, there is a
  finite subset $\hat{\credal}$ of $\probabilities[\partition]$ such
  that $\credal\sim_\delta\hat{\credal}$ and
  $|\hat{\credal}|\le\binom{|\partition|(1+1/\delta)}{|\partition|-1}$.
\end{lemma}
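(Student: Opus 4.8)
The plan is to strip the statement down to a pure covering problem on the standard simplex. Writing $n=|\partition|$ and $\partition=\{A_1,\dots,A_n\}$, I would identify each charge $\hat P\in\probabilities[\partition]$ with the vector $(\hat P(A_1),\dots,\hat P(A_n))$ in the standard simplex $\Delta=\{x\in\SetR^n\colon x_i\ge 0,\ \sum_i x_i=1\}$; since $\partition$ is finite, every point of $\Delta$ is a probability charge on $\wp(\partition)$, so $\probabilities[\partition]$ \emph{is} $\Delta$. Similarly each $P\in\credal$ restricts to the vector $r(P)=(P(A_1),\dots,P(A_n))\in\Delta$, and I set $S=\{r(P)\colon P\in\credal\}$. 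Under this identification the relation $P\sim_\delta\hat P$ is exactly $\|r(P)-\hat P\|_1\le\delta$, so $\credal\sim_\delta\hat\credal$ becomes the purely metric condition that $S$ and $\hat\credal$ are two-sided $\ell^1$-$\delta$-close as subsets of $\Delta$. Crucially, this depends on $\credal$ only through $S$, and uses no structure (convexity, closedness) of $\credal$ whatsoever.

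The engine is a finite $\delta$-net of the simplex, which is the technical result relegated to the appendix: there is a finite $G\subseteq\Delta$ with $|G|\le\binom{n(1+1/\delta)}{n-1}$ such that every $x\in\Delta$ satisfies $\|x-g\|_1\le\delta$ for some $g\in G$. Concretely one takes the lattice $G=\{(k_1/m,\dots,k_n/m)\colon k_i\ \text{nonnegative integers},\ \sum_i k_i=m\}$ for a suitable $m$ of order $n/\delta$; its cardinality is the stars-and-bars count $\binom{m+n-1}{n-1}$, which for the right $m$ is at most $\binom{n(1+1/\delta)}{n-1}$, and a rounding argument shows $G$ covers $\Delta$ to within $\ell^1$-distance $\delta$.

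Given such a net I would simply \emph{prune} it: define
\[
  \hat\credal=\bigl\{g\in G\colon(\exists P\in\credal)\,(\|r(P)-g\|_1\le\delta)\bigr\},
\]
i.e.\ keep only the grid charges that are genuinely $\delta$-close to $\credal$. The forward half of $\sim_\delta$ holds because for any $P\in\credal$ the point $r(P)\in S\subseteq\Delta$ is covered by some $g\in G$ with $\|r(P)-g\|_1\le\delta$, and this $g$ then lies in $\hat\credal$ with $P$ itself as witness, so $P\sim_\delta g$. The backward half is immediate: every $g\in\hat\credal$ carries, by construction, a witness $P\in\credal$ with $P\sim_\delta g$. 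Finally $\hat\credal\subseteq G$ yields the size bound. It is worth stressing that phrasing the pruning condition with an explicit witness $P$ (rather than with the infimum distance from $g$ to $S$) is precisely what makes the backward half hold on the nose even when $S$ is neither closed nor convex; this is exactly where the generality over arbitrary $\credal$ is bought cheaply.

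The set-theoretic pruning is routine; the real obstacle is the appendix net itself. The delicate point there is the rounding step: to show the lattice $\delta$-covers $\Delta$ one must round the coordinates of $m\,x$ to nonnegative integers \emph{summing to $m$} (so the rounded point stays in $\Delta$) while keeping the total $\ell^1$ error below $m\delta$, and naive coordinatewise rounding destroys the sum constraint. Pinning down the smallest $m$ for which this is possible, and checking that the resulting stars-and-bars count does not exceed $\binom{n(1+1/\delta)}{n-1}$, is the calculation I expect to be the crux.
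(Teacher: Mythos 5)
Your proof is correct and follows essentially the same route as the paper's: both reduce the statement to the finite lattice $\Delta^n_N$ on the standard simplex, whose cardinality is the stars-and-bars count $\binom{N+n-1}{n-1}$, combined with the sum-preserving rounding lemma that makes it a $\delta$-net --- exactly the two appendix lemmas the paper invokes. The only immaterial difference is the definition of $\hat{\credal}$: you prune the full net down to the grid points $\delta$-close to $\credal$, whereas the paper takes the image of $\credal$ under the rounding map; both sets are subsets of $\Delta^n_N$, so the cardinality bound and the two-sided verification of $\credal\sim_\delta\hat{\credal}$ are the same in substance.
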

\begin{proof}
  Consider any $P$ in
  $\credal$. Let $n=|\partition|$ and let the elements of $\partition$
  be $A_1$, \dots, $A_n$. Consider the vector
  $\vec{x}=(P(A_1),\dots,P(A_n))$ in $\Delta^n$. Let $N$ be the
  smallest natural number such that $N\ge
  n/\delta$.
  
  By Lemma~\ref{lem:simplex:approx} in the appendix, there is a vector $\vec{y}$ in
  $\Delta^n_N$ such that
  \begin{equation*}
    |\vec{x}-\vec{y}|_1< n/N\le\delta
  \end{equation*}
  Define $\hat{P}$ in $\probabilities[\partition]$ by
  \begin{equation*}
    \hat{P}(A_i)=y_i
  \end{equation*}
  for all $i\in\{1,\dots,n\}$---by finite additivity, $\hat{P}$ is
  well defined on $\wp(\partition)$. By construction,
  $P\sim_\delta\hat{P}$ because
  \begin{equation*}
    \sum_{i=1}^n
    \left|
      P(A_i)-\hat{P}(A_i)
    \right|
    =
    |\vec{x}-\vec{y}|_1
    <\delta
  \end{equation*}
  
  Approximating each $P$ in $\credal$ in this manner, the set
  \begin{equation*}
    \hat{\credal}=\{\hat{P}\colon P\in\credal\}
  \end{equation*}
  is finite as each of its elements corresponds to an element of the
  finite set $\Delta^n_N$, and therefore
  $|\hat{\credal}|\le|\Delta^n_N|$.  By
  Lemma~\ref{lem:simplex:cardinality} in the appendix,
  \begin{align*}
    |\hat{\credal}|
    &\le\binom{N+n-1}{N}=\binom{N+n-1}{n-1}
    \\
    &\le\binom{n/\delta+1+n-1}{n-1}
    =\binom{|\partition|(1+1/\delta)}{|\partition|-1}
  \end{align*}
  The second inequality follows from the fact that $\binom{a}{b}$ is
  strictly increasing in $a$, for fixed $b$ (for integer $a$ and $b$
  this follows immediately from Pascal's triangle; the general case
  follows from the properties of the Gamma function).
\end{proof}

Table~\ref{tab:credal} lists upper bounds on the cardinality  of $\hat{\credal}$ on a logarithmic scale,
for some values of $|\mathcal{A}|$ and $\delta$. The
cardinality grows enormously fast with increasing $|\mathcal{A}|$ and $1/\delta$. Within the range of Table~\ref{tab:credal}, an exponential trend is obvious.
The table shows that the influence of $|\mathcal{A}|$ is much larger than the
influence of $\delta$: more precisely, doubling $|\partition|$ increases
$|\hat{\credal}|$ by far more than halving $\delta$.
\begin{table}
  \begin{tabular}{r|rrr}
    &        $\delta$: \\
    &        $0.2$  & $0.1$  & $0.05$ \\
    \hline
    $|\mathcal{A}|$:
    $4$  & $3.3$  & $4.1$  & $5.0$ \\
    $8$  & $7.9$  & $9.8$  & $11.8$ \\
    $12$  & $12.5$  & $15.5$  & $18.7$ \\
    $16$  & $17.1$  & $21.3$  & $25.6$ \\
    $20$  & $21.8$  & $27.1$  & $32.6$ \\
    $24$  & $26.4$  & $32.9$  & $39.5$ \\
    $28$  & $31.1$  & $38.6$  & $46.5$ \\
    $32$  & $35.8$  & $44.4$  & $53.4$ \\
    \hline
    $\log_{10}(|\mathcal{A}|)$:
    $0.7$  & $4.4$  & $5.5$  & $6.7$  \\
    $1.4$  & $27.6$  & $34.3$  & $41.3$ \\
    $2.1$  & $144.6$  & $179.5$  & $215.5$ \\
    $2.8$  & $731.3$  & $906.8$  & $1088.2$ \\
    $3.5$  & $3666.1$  & $4544.7$  & $5452.8$ \\
    $4.2$  & $18341.5$  & $22735.9$  & $27277.5$ \\
    $4.9$  & $91719.7$  & $113693.0$  & $136402.5$
  \end{tabular}
  \caption{Upper bound on $\log_{10}(|\hat{\credal}|)$, i.e.\ the logarithm of the cardinality of the finite set of probability charges $\hat{\credal}$, for various values of precision $\delta>0$ and cardinality of the partition $|\partition|$ (see Lemma~\ref{lem:approx:probabilities}).}
  \label{tab:credal}
\end{table}

Next, we study the effect on the expectation if both gambles and
probabilities are approximated. Let us use the notation $E_P(f)=\int_\pspace f(w)\ud P(w)$.
In the lemma below, assume $0<\epsilon<1/2$.

\begin{lemma}\label{lem:approx:expectation}
  For every finite partition $\partition$ of $\pspace$, every
  $f\in\gambles$, $\hat{f}\in\gambles[\partition]$,
  $P\in\probabilities$, and
  $\hat{P}\in\probabilities[\partition]$, the following implications
  hold. If $f\sim_\epsilon\hat{f}$ and $P\sim_\delta\hat{P}$ then
  \begin{equation*}
    \left|
      E_P(f) - E_{\hat{P}}(\hat{f})
    \right|
    \le
    [\sup f-\inf f](\epsilon+\delta(1+2\epsilon))
  \end{equation*}
  and
  \begin{equation*}
    \left|
      E_P(f) - E_{\hat{P}}(\hat{f})
    \right|
    \le
    [\sup \hat{f}-\inf \hat{f}]\left(\frac{\epsilon}{1-2\epsilon}+\delta\right)
  \end{equation*}
\end{lemma}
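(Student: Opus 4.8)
The plan is to interpolate through the simple gamble on $\pspace$ that is constant on each block of $\partition$ and there agrees with $\hat f$. Concretely, define $\tilde f\in\gambles$ by $\tilde f(w)=\hat f(A)$ whenever $w\in A\in\partition$. As $\tilde f$ is constant on each block, its Dunford integral against $P$ reduces to a finite sum, $E_P(\tilde f)=\sum_{A\in\partition}\hat f(A)P(A)$, while $E_{\hat P}(\hat f)=\sum_{A\in\partition}\hat f(A)\hat P(A)$. I would then split
\begin{equation*}
  E_P(f)-E_{\hat P}(\hat f)
  =\bigl(E_P(f)-E_P(\tilde f)\bigr)+\bigl(E_P(\tilde f)-E_{\hat P}(\hat f)\bigr),
\end{equation*}
so that the first summand isolates the error from replacing $f$ by $\hat f$ and the second isolates the error from replacing $P$ by $\hat P$.

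For the first summand, the definition of $f\sim_\epsilon\hat f$ gives $|f(w)-\tilde f(w)|=|f(w)-\hat f(A)|\le[\sup f-\inf f]\epsilon$ for every $w\in A$, so, $P$ being a probability charge, $|E_P(f)-E_P(\tilde f)|=|E_P(f-\tilde f)|\le\sup_w|f(w)-\tilde f(w)|\le[\sup f-\inf f]\epsilon$. For the second summand, write it as $\sum_{A\in\partition}\hat f(A)[P(A)-\hat P(A)]$. The key observation is that $P$ and $\hat P$ are both probability charges, whence $\sum_A[P(A)-\hat P(A)]=1-1=0$; this lets me subtract the constant $\inf\hat f$ from each coefficient without altering the sum. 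A plain absolute-value estimate together with $P\sim_\delta\hat P$ then gives
\begin{equation*}
  |E_P(\tilde f)-E_{\hat P}(\hat f)|
  \le\max_A\bigl|\hat f(A)-\inf\hat f\bigr|\sum_A|P(A)-\hat P(A)|
  \le[\sup\hat f-\inf\hat f]\,\delta.
\end{equation*}

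At this stage the first summand is controlled by the range of $f$ and the second by the range of $\hat f$, so to obtain each stated inequality I must express both contributions through a single range. The needed comparison comes directly from $f\sim_\epsilon\hat f$: for any $w$ in a (nonempty) block $A$ one has $\hat f(A)-[\sup f-\inf f]\epsilon\le f(w)\le\hat f(A)+[\sup f-\inf f]\epsilon$, and taking suprema and infima over $w$ and over $A$ gives both
\begin{equation*}
  [\sup\hat f-\inf\hat f]\le[\sup f-\inf f](1+2\epsilon)
  \qquad\text{and}\qquad
  [\sup f-\inf f]\le\frac{[\sup\hat f-\inf\hat f]}{1-2\epsilon},
\end{equation*}
the latter requiring $\epsilon<1/2$. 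Substituting the left comparison into the bound on the second summand and adding the first summand yields the first inequality; substituting the right comparison into the bound on the first summand and adding the second summand yields the second. I expect the range comparison to be the only genuine subtlety — in particular, that the factor $1/(1-2\epsilon)$ is finite precisely because $\epsilon<1/2$ — while everything else reduces to the triangle inequality and the zero-sum trick used on the second summand.
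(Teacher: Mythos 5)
Your proposal is correct and follows essentially the same route as the paper: the same splitting of the error into a gamble-approximation term (bounded by $[\sup f-\inf f]\epsilon$) and a probability-approximation term handled with the identical zero-sum trick of subtracting $\inf\hat f$, followed by the same range comparison $[\sup f-\inf f](1-2\epsilon)\le[\sup\hat f-\inf\hat f]\le[\sup f-\inf f](1+2\epsilon)$. The only cosmetic difference is that you bound the first term via the piecewise-constant interpolant $\tilde f$, whereas the paper writes $\int_A f\ud P=r_A P(A)$ for some $r_A\in[\inf_{w\in A}f(w),\sup_{w\in A}f(w)]$ and bounds $|r_A-\hat f(A)|$ directly; these are interchangeable.
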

\begin{proof}
  Let $R=\sup f-\inf f$, $\hat{R}=\sup\hat{f}-\inf\hat{f}$, and write
  $\inf_A f$ for $\inf_{w\in A}f(w)$ and $\sup_A f$ for $\sup_{w\in
    A}f(w)$. Then
  \begin{align*}
    &\left|
      E_P(f) - E_{\hat{P}}(\hat{f})
    \right|
    =
    \left|
      \sum_{A\in\partition}\left(
        \int_A f\ud P - \hat{f}(A)\hat{P}(A)
      \right)
    \right|
    \\
    \intertext{and since $P(A)\inf_A f\le\int_A f\ud P\le P(A)\sup_A f$, there is an $r_A\in[\inf_A f,\sup_A f]$ such that $P(A)r_A=\int_A f\ud P$, and hence}
    &=
    \left|
      \sum_{A\in\partition}\left(
        r_AP(A) - \hat{f}(A)\hat{P}(A)
      \right)
    \right|
    \\
    \intertext{but, because $|f(w)-\hat{f}(A)|\le R\epsilon$ for all $w\in A$, and $\inf_A f\le r_A\le\sup_A f$, it must also hold that $|r_A-\hat{f}(A)|\le R\epsilon$, so
    $
    \left|
      \sum_{A\in\partition}\left(
        r_AP(A) - \hat{f}(A)P(A)
      \right)
    \right|
    \le
    \sum_{A\in\partition}
    \left|
        r_A - \hat{f}(A)
    \right|
    P(A)
    \le
    \sum_{A\in\partition}
    R\epsilon
    P(A)
    =R\epsilon
    $, whence}
    &\le
    \left|
      \sum_{A\in\partition}\left(
        \hat{f}(A)P(A) - \hat{f}(A)\hat{P}(A)
      \right)
    \right|
    +R\epsilon
    \\
    &=
    \left|
      \sum_{A\in\partition}
        \hat{f}(A)
        \left(
          P(A) - \hat{P}(A)
        \right)
    \right|
    +R\epsilon
    \\
    \intertext{and because $\sum_{A\in\partition}(P(A)-\hat{P}(A))=0$,}
    &=
    \left|
      \sum_{A\in\partition}
        (\hat{f}(A)-\inf \hat{f})
        \left(
          P(A) - \hat{P}(A)
        \right)
    \right|
    +R\epsilon
    \\
    &\le
    \sum_{A\in\partition}
    (\hat{f}(A)-\inf \hat{f})
    \left|
      P(A) - \hat{P}(A)
    \right|
    +R\epsilon
    \\
    &\le
    (\sup\hat{f}-\inf \hat{f})
    \sum_{A\in\partition}
    \left|
      P(A) - \hat{P}(A)
    \right|
    +R\epsilon
    \\
    &\le
    \hat{R}\delta+R\epsilon
    \\
    \intertext{and since $R(1+2\epsilon)\ge\hat{R}\ge R(1-2\epsilon)$}
    &\le
    \begin{cases}
      R(1+2\epsilon)\delta+R\epsilon=R(\epsilon+\delta(1+2\epsilon)) \\
      \hat{R}\delta+\hat{R}\epsilon/(1-2\epsilon)=\hat{R}\left(\epsilon/(1-2\epsilon)+\delta\right)
    \end{cases}
  \end{align*}
\end{proof}

Let us now investigate what is the most optimal choice for
$\epsilon>0$ and $\delta>0$. The cardinality of
$\hat{\credal}$ is of largest concern as it grows enormously fast with
increasing cardinality of the finite partition $\partition$ and with increasing
precision $1/\delta$ (see
Table~\ref{tab:credal}). Therefore, as a first step, let us see how we
can minimise $|\hat{\credal}|$, assuming a fixed relative error
$\epsilon+\delta$ on the expectation (see
Lemma~\ref{lem:approx:expectation})---omitting higher
order terms in $\epsilon$ and $\delta$ to simplify the analysis.

We wish to
minimise the upper bound (neglecting lower order terms)
\begin{equation*}
  \binom{(1/(\epsilon^{|D|}\delta)}{1/{\epsilon}^{|D|}}
\end{equation*}
on $|\hat{\credal}|$ along the $\epsilon$--$\delta$-curve
$\gamma(\epsilon,\delta)=\epsilon+\delta=\gamma_*$. Figure~\ref{fig:gamma} demonstrates a typical case: the $\epsilon$--$\delta$-ratio
has a large impact on the upper bound of $|\hat{\credal}|$. In particular, the
curve grows extremely large for small $\epsilon$, because a small $\epsilon$
corresponds to a large partition $\partition$, and the cardinality of the partition has a huge impact on the cardinality of $\credal$ as shown in Table~\ref{tab:credal}.
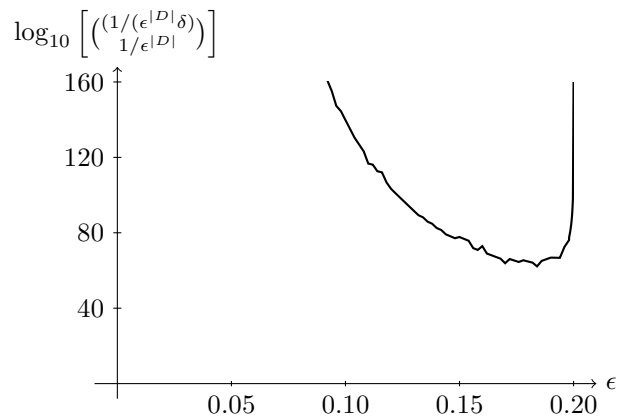
\begin{figure}
  \begin{center}
   \begin{tikzpicture}[xscale=1.5]
     \draw[->] (-0.2,0) -- (4.2,0) node[right] {$\epsilon$};
     \draw[->] (0,-0.2) -- (0,4.2) node[above] {$\log_{10}\left[\binom{(1/(\epsilon^{|D|}\delta)}{1/{\epsilon}^{|D|}}\right]$};
     \foreach \x/\xtext in {1/0.05,2/0.10,3/0.15,4/0.20}
     \draw (\x cm,1pt) -- (\x cm,-1pt) node[anchor=north] {$\xtext$};
     \foreach \y/\ytext in {1/40,2/80,3/120,4/160}
     \draw (1pt,\y cm) -- (-1pt,\y cm) node[anchor=east] {$\ytext$};
     \clip (0,0) rectangle (4,4);
     \draw[xscale=20,yscale=0.025,thick] plot coordinates {
 ( 0.19999 , 160 )
 ( 0.1997 , 97.8608741706 )
 ( 0.1996 , 94.7477940539 )
 ( 0.1995 , 92.3353788655 )
 ( 0.1994 , 90.3662593767 )
 ( 0.19938 , 90.0123381674 )
 ( 0.19936 , 89.6697204213 )
 ( 0.19934 , 89.3376067698 )
 ( 0.19932 , 89.0155709221 )
 ( 0.19926 , 88.1036692063 )
 ( 0.19924 , 87.8162007617 )
 ( 0.19922 , 87.536391416 )
 ( 0.19918 , 86.9973580326 )
 ( 0.19916 , 86.7377472816 )
 ( 0.19914 , 86.4843790152 )
 ( 0.19912 , 86.2370906627 )
 ( 0.1991 , 85.9948947146 )
 ( 0.19904 , 85.3004010684 )
 ( 0.19902 , 85.0787847299 )
 ( 0.199 , 84.8615876423 )
 ( 0.19896 , 84.4396765194 )
 ( 0.19894 , 84.2351765566 )
 ( 0.19892 , 84.0341809342 )
 ( 0.1989 , 83.836960546 )
 ( 0.19886 , 83.4534960747 )
 ( 0.19884 , 83.2667809825 )
 ( 0.19878 , 82.7253836686 )
 ( 0.1987 , 82.0440599959 )
 ( 0.19868 , 81.8805686246 )
 ( 0.19866 , 81.7191751074 )
 ( 0.19864 , 81.5605945889 )
 ( 0.19858 , 81.097875651 )
 ( 0.19854 , 80.8005978141 )
 ( 0.19848 , 80.3694605846 )
 ( 0.1984 , 79.8210394485 )
 ( 0.19838 , 79.6881565118 )
 ( 0.19836 , 79.5571333862 )
 ( 0.19834 , 79.4273484942 )
 ( 0.19832 , 79.2995839298 )
 ( 0.1983 , 79.1732039206 )
 ( 0.19828 , 79.048275089 )
 ( 0.19826 , 78.9248652255 )
 ( 0.19822 , 78.6821190827 )
 ( 0.1982 , 78.5629069535 )
 ( 0.19816 , 78.3283371166 )
 ( 0.19814 , 78.2130924654 )
 ( 0.19812 , 78.0990174872 )
 ( 0.19808 , 77.8745783056 )
 ( 0.19806 , 77.7643168843 )
 ( 0.19802 , 76.1321555741 )
 ( 0.196 , 72.5223007109 )
 ( 0.194 , 66.7324780368 )
 ( 0.19 , 66.8458308753 )
 ( 0.186 , 65.1003605391 )
 ( 0.184 , 62.2010471925 )
 ( 0.182 , 64.1990129644 )
 ( 0.18 , 64.7925715525 )
 ( 0.178 , 65.4478669545 )
 ( 0.176 , 64.537520035 )
 ( 0.174 , 65.3047198335 )
 ( 0.172 , 66.1041335672 )
 ( 0.17 , 63.8745781092 )
 ( 0.168 , 66.2950422457 )
 ( 0.166 , 67.1840668137 )
 ( 0.164 , 68.0996175462 )
 ( 0.162 , 69.0226530496 )
 ( 0.16 , 72.9347036435 )
 ( 0.158 , 70.910201652 )
 ( 0.156 , 71.8728728581 )
 ( 0.154 , 75.7997747596 )
 ( 0.152 , 76.7849575813 )
 ( 0.15 , 77.7840670041 )
 ( 0.148 , 77.1391449471 )
 ( 0.146 , 78.171115114 )
 ( 0.144 , 79.2065097354 )
 ( 0.142 , 81.4806279169 )
 ( 0.14 , 82.5449799151 )
 ( 0.138 , 84.8252664221 )
 ( 0.136 , 85.9354377602 )
 ( 0.134 , 88.2129109128 )
 ( 0.132 , 89.3535622746 )
 ( 0.13 , 91.6421137107 )
 ( 0.128 , 93.9663075186 )
 ( 0.126 , 96.2663052382 )
 ( 0.124 , 98.6116177056 )
 ( 0.122 , 100.938142972 )
 ( 0.12 , 103.318601407 )
 ( 0.118 , 106.770155561 )
 ( 0.116 , 112.099518012 )
 ( 0.114 , 112.666354116 )
 ( 0.112 , 116.15441194 )
 ( 0.11 , 116.737195536 )
 ( 0.108 , 123.227695217 )
 ( 0.106 , 126.822862353 )
 ( 0.104 , 130.448244241 )
 ( 0.102 , 135.116219626 )
 ( 0.1 , 139.805164384 )
 ( 0.098 , 144.566504162 )
 ( 0.096 , 147.322080859 )
 ( 0.094 , 155.211344399 )
 ( 0.092 , 161.089293471 )
 ( 0.09 , 167.048607006 )
 ( 0.088 , 174.047058463 )
 };
   \end{tikzpicture}
  \caption{Upper bound on $\log_{10}|\hat{\credal}|$ for various values of $\epsilon$, with $\epsilon+\delta=0.2$ and $|D|=2$.}
  \label{fig:gamma}
  \end{center}
\end{figure}

\section{Approximate Choice}
\label{sec:approx:choice}

Let us now consider again the decision problem $(\pspace,D,\credal,L)$
with state space $\pspace$, decision space $D$, credal set $\credal$,
and loss function $L$, and reflect upon how the results in the
previous section could be of use in finding the optimal decisions
$\opt(\pspace,D,\credal,L)$.
Can we still find the optimal decisions after approximating the
loss function $L$ and the set of probabilities $\credal$?

As we admit a relative error on gambles and probabilities, and therefore
also on previsions, we should admit a relative error on the choice function
as well.  Let $R_D$ be defined by (recall that $f_d(w)=-L(d,w)$)
\begin{equation*}
  R_D=\sup_{d\in D}[\sup f_d-\inf f_d]
\end{equation*}

\begin{definition}
  Let $\epsilon\ge 0$. A decision $d$ in $D$ is called an
  \emph{$\epsilon$-optimal decision} for the decision problem
  $(\pspace,D,\credal,L)$ if it belongs to the set
  \begin{equation*}
    \opt^\epsilon(\pspace,D,\credal,L)=
    \left\{
      d\in D\colon 
      (\exists P\in\credal)
      \left(
        \sup_{e\in D}E_P(e)-E_P(d)
        \le\epsilon R_D
      \right)
    \right\}
  \end{equation*}
\end{definition}
Note that
\begin{equation*}
\opt^\epsilon(\pspace,D,\credal,aL+b)
=\opt^\epsilon(\pspace,D,\credal,L)
\end{equation*}
for any real numbers $a$ and $b$, $a>0$. In other words,
$\opt^\epsilon(\pspace,D,\credal,L)$ is invariant with respect to positive linear
transformations of utility: $\epsilon$-optimality does not depend on our choice of
utility scale.

Clearly,
\begin{equation*}
  \opt(\pspace,D,\credal,L)\subseteq\opt^\epsilon(\pspace,D,\credal,L)
\end{equation*}
because $$\opt^{\epsilon}(\pspace,D,\credal,L)\subseteq\opt^{\delta}(\pspace,D,\credal,L)$$ whenever $\epsilon\le\delta$, and $$\opt^0(\pspace,D,\credal,L)=\opt(\pspace,D,\credal,L)$$

In approximating a decision problem $(\pspace,D,\credal,L)$, we
start with a finite partition $\mathcal{A}$,
consider a (possibly
finite) set $\hat{\credal}$ such that
$\credal\sim_\delta\hat{\credal}$, and
approximate the loss $L(d,w)$ by a loss $\hat{L}(d,A)$
such that $L\sim_\epsilon\hat{L}$.

\begin{theorem}\label{thm:approx:opt}
  Consider two decision problems $(\pspace,D,\credal,L)$ and
  $(\partition,D,\hat{\credal},\hat{L})$.
  If $L\sim_\epsilon\hat{L}$ and
  $\credal\sim_\delta\hat{\credal}$ then, for any $\gamma\ge 0$,
  \begin{equation}
  \label{eq:thm:approx:opt:1}
    \opt^\gamma(\pspace,D,\credal,L)
    \subseteq
    \opt^{\frac{\gamma}{1-2\epsilon}+2(\frac{\epsilon}{1-2\epsilon}+\delta)}%
    (\partition,D,\hat{\credal},\hat{L})
  \end{equation}
  and
  \begin{equation}
    \label{eq:thm:approx:opt:2}
    \opt^\gamma(\partition,D,\hat{\credal},\hat{L})
    \subseteq
    \opt^{\gamma(1+2\epsilon)+2(\epsilon+\delta(1+2\epsilon))}(\pspace,D,\credal,L)
  \end{equation}
\end{theorem}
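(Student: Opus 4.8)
The plan is to prove each inclusion directly from the definition of $\opt^\gamma$, by transporting the witnessing probability charge across $\credal\sim_\delta\hat{\credal}$ and then controlling the resulting change in expected utility with Lemma~\ref{lem:approx:expectation}, applied uniformly over all decisions. Throughout I keep the standing hypothesis $0<\epsilon<1/2$ that Lemma~\ref{lem:approx:expectation} requires, and I write $E_P(f_e)$ for the prevision of decision $e$ (so $E_P(d)=E_P(f_d)$, matching the notation of the lemma).

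First I would record a preliminary fact that makes the two forms of the error bound interchangeable. Writing $R_d=\sup f_d-\inf f_d$ and $\hat{R}_d=\sup\hat{f}_d-\inf\hat{f}_d$, the relation $f_d\sim_\epsilon\hat{f}_d$ forces $R_d(1-2\epsilon)\le\hat{R}_d\le R_d(1+2\epsilon)$---this is exactly the two-sided estimate already used inside the proof of Lemma~\ref{lem:approx:expectation}. Taking the supremum over $d\in D$ and setting $\hat{R}_D=\sup_{d\in D}\hat{R}_d$ gives
\[
  R_D(1-2\epsilon)\le\hat{R}_D\le R_D(1+2\epsilon).
\]
This is what lets me rewrite a tolerance measured in units of $R_D$ as one measured in units of $\hat{R}_D$, and conversely.

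For \eqref{eq:thm:approx:opt:1} I would take $d\in\opt^\gamma(\pspace,D,\credal,L)$ with witness $P\in\credal$, so $\sup_{e\in D}E_P(f_e)-E_P(f_d)\le\gamma R_D$, and choose $\hat{P}\in\hat{\credal}$ with $P\sim_\delta\hat{P}$. The second ($\hat{f}$-based) bound of Lemma~\ref{lem:approx:expectation} gives $|E_P(f_e)-E_{\hat{P}}(\hat{f}_e)|\le\hat{R}_D(\frac{\epsilon}{1-2\epsilon}+\delta)$ for every $e$, with a constant that does not depend on $e$. Using this once to raise $E_{\hat{P}}(\hat{f}_e)$ at the approximate optimum and once to lower $E_{\hat{P}}(\hat{f}_d)$, and factoring the uniform constant out of $\sup_e$, I obtain $\sup_{e}E_{\hat{P}}(\hat{f}_e)-E_{\hat{P}}(\hat{f}_d)\le\gamma R_D+2\hat{R}_D(\frac{\epsilon}{1-2\epsilon}+\delta)$; substituting $R_D\le\hat{R}_D/(1-2\epsilon)$ turns the right-hand side into $\hat{R}_D[\frac{\gamma}{1-2\epsilon}+2(\frac{\epsilon}{1-2\epsilon}+\delta)]$, which is precisely the claimed tolerance and witnesses membership in the approximate optimal set. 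Inclusion \eqref{eq:thm:approx:opt:2} is the mirror image: start from a witness $\hat{P}\in\hat{\credal}$, use the converse half of $\credal\sim_\delta\hat{\credal}$ to get $P\in\credal$ with $P\sim_\delta\hat{P}$, apply instead the first ($f$-based) bound $|E_P(f_e)-E_{\hat{P}}(\hat{f}_e)|\le R_D(\epsilon+\delta(1+2\epsilon))$, double it through the supremum, and replace $\hat{R}_D$ by $R_D(1+2\epsilon)$.

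The only delicate bookkeeping is the two doublings and the matching of forms. The factor $2$ is unavoidable because the approximation error must be charged once against the best available decision and once against the candidate $d$. The subtlety is to pair the $\hat{f}$-form of the lemma with $R_D\le\hat{R}_D/(1-2\epsilon)$ for the first inclusion, but the $f$-form with $\hat{R}_D\le R_D(1+2\epsilon)$ for the second, so that the final tolerance is expressed in the denominator appropriate to the target problem. The supremum-over-$D$ step, which might look like the main obstacle, is in fact harmless: since the lemma's bound is uniform in the decision (depending only on $R_D$ or $\hat{R}_D$), the additive constant factors cleanly out of $\sup_{e\in D}$, so no interchange of limits is involved.
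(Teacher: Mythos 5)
Your proposal is correct and follows essentially the same route as the paper: pick the witnessing charge, transport it across $\credal\sim_\delta\hat{\credal}$, apply Lemma~\ref{lem:approx:expectation} twice (the $\hat{f}$-form with $R_D\le\hat{R}_D/(1-2\epsilon)$ for the first inclusion, the $f$-form with $\hat{R}_D\le R_D(1+2\epsilon)$ for the second), with the uniform constant passing through the supremum exactly as in the paper's chain of inequalities. Your explicit statement of the two-sided bound $R_D(1-2\epsilon)\le\hat{R}_D\le R_D(1+2\epsilon)$ is a detail the paper leaves implicit (it appears inside the proof of Lemma~\ref{lem:approx:expectation}), but it is the same fact used in the same way.
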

\begin{proof}
  We prove Eq.~\eqref{eq:thm:approx:opt:1}.
  Let $d\in\opt^\gamma(\pspace,D,\credal,L)$. Then
  \begin{equation}
    \label{eq:thm:approx:optimality:1}
    \sup_{e\in D}E_P(f_e)-E_P(f_d)\le\gamma R_D
  \end{equation}
  for some $P\in\credal$. Let $\hat{P}$ be such that
  $P\sim_\delta\hat{P}$.
  Because, by Lemma~\ref{lem:approx:expectation},
  \begin{align}
    \nonumber
    \left|
      \sup_{e\in D}E_{\hat{P}}(\hat{f}_e)
      -\sup_{e'\in D}E_P(f_{e'})
    \right|
    & \le
    \nonumber
    \sup_{e\in D}
    \left|
      E_{\hat{P}}(\hat{f}_e)-E_P(f_e)
    \right|
    \\ & \le
    \nonumber
    \sup_{e\in D}[\sup\hat{f}_e-\inf\hat{f}_e](\epsilon/(1-2\epsilon)+\delta)
    \\ & =(\epsilon/(1-2\epsilon)+\delta) \hat{R}_D
    \label{eq:thm:approx:optimality:2}
  \end{align}
  it follows that
  \begin{align*}
    \sup_{e\in D}E_{\hat{P}}(\hat{f}_e)-E_{\hat{P}}(\hat{f}_d)
    &\le
    \sup_{e\in D}E_P(f_e)-E_{\hat{P}}(\hat{f}_d)+(\epsilon/(1-2\epsilon)+\delta) \hat{R}_D
    \\
    \intertext{and again by Lemma~\ref{lem:approx:expectation},}
    &
    \le
    \sup_{e\in D}E_P(f_e)-E_{P}(f_d)+2(\epsilon/(1-2\epsilon)+\delta)\hat{R}_D
    \\
    \intertext{and by Eq.~\eqref{eq:thm:approx:optimality:1},}
    &
    \le
    \gamma R_D+2(\epsilon/(1-2\epsilon)+\delta)\hat{R}_D
    \\
    &
    \le
    [\gamma/(1-2\epsilon)+2(\epsilon/(1-2\epsilon)+\delta)]\hat{R}_D
  \end{align*}
  hence, $d\in\opt^{\gamma/(1-2\epsilon)+2(\epsilon/(1-2\epsilon)+\delta)}(\partition,D,\hat{\credal},\hat{L})$.

  Next, we prove Eq.~\eqref{eq:thm:approx:opt:2}.
  Let $d\in\opt^{\gamma}(\partition,D,\hat{\credal},\hat{L})$. Then
  \begin{equation}\label{eq:thm:approx:optimality:3}
    \sup_{e\in D}E_{\hat{P}}(\hat{f}_e)-E_{\hat{P}}(\hat{f}_d)
    \le
    \gamma \hat{R}_D
  \end{equation}
  Because, by Lemma~\ref{lem:approx:expectation},
  \begin{align}
    \left|
      \sup_{e\in D}E_{\hat{P}}(\hat{f}_e)
      -\sup_{e'\in D}E_P(f_{e'})
    \right|
    & \le
    \nonumber
    \sup_{e\in D}
    \left|
      E_{\hat{P}}(\hat{f}_e)-E_P(f_e)
    \right|
    \\ & \le
    \nonumber
    \sup_{e\in D}[\sup f_e-\inf f_e](\epsilon+\delta(1+2\epsilon))
    \\ & =
    (\epsilon+\delta(1+2\epsilon)) R_D
    \label{eq:thm:approx:optimality:4}
  \end{align}
  we have that
  \begin{align*}
    \sup_{e\in D}E_P(f_e)-E_P(f)
    &
    \le
    \sup_{e\in D}E_{\hat{P}}(\hat{f}_e)-E_P(f)+(\epsilon+\delta(1+2\epsilon)) R_D
    \\
    \intertext{and again by Lemma~\ref{lem:approx:expectation},}
    &
    \le
    \sup_{e\in D}E_{\hat{P}}(\hat{f}_e)
    -E_{\hat{P}}(\hat{f}_e)
    +2(\epsilon+\delta(1+2\epsilon)) R_D
    \\
    \intertext{and by Eq.~\eqref{eq:thm:approx:optimality:3}}
    &
    \le
    \gamma \hat{R}_D
    +2(\epsilon+\delta(1+2\epsilon))R_D
    \\
    &
    \le
    [\gamma(1+2\epsilon)+2(\epsilon+\delta(1+2\epsilon))] R_D
  \end{align*}
  so $d\in\opt^{\gamma(1+2\epsilon)+2(\epsilon+\delta(1+2\epsilon))}(\pspace,D,\credal,L)$.
\end{proof}

If we ignore higher order terms in $\gamma$, $\epsilon$, and $\delta$,
then the above theorem says that when moving from an original decision
problem to an approximate decision problem, or the other way around,
with relative error $\epsilon$ in gambles and relative error $\delta$
in probabilities, the relative error in optimality increases by
$2(\epsilon+\delta)$.  For example, for small $\epsilon$ and $\delta$
the following holds, up to a small error: if $L\sim_\epsilon\hat{L}$ and $\credal\sim_\delta\hat{\credal}$, then
\begin{equation*}
    \opt(\pspace,D,\credal,L)
    \subseteq
    \opt^{2(\epsilon+\delta)}(\partition,D,\hat{\credal},\hat{L})
    \subseteq
    \opt^{4(\epsilon+\delta)}(\pspace,D,\credal,L)
\end{equation*}
So, the approximate problem with relative error $2(\epsilon+\delta)$
will contain all solutions to the original problem with no
relative error, and will, so to speak, not contain any solutions to the original
problem with relative error over $4(\epsilon+\delta)$. Because of this
property,
$\opt^{2(\epsilon+\delta)}(\partition,D,\hat{\credal},\hat{L})$ seems
a logical choice when solving decision problems in practice.

\section{Pairwise Choice}
\label{sec:approx:pairwise:choice}

Table~\ref{tab:credal} reveals that the size of the credal
set is a serious computational bottleneck. Therefore, it is worth investigating
how the size of $\hat{\credal}$ can be reduced, without
compromising the accuracy $\delta>0$. One way to this end is to restrict to
pairwise comparisons, i.e.\ using maximality (see Walley \cite[Sec.~3.7--3.9]{1991:walley}).

\subsection{Maximality}

\begin{definition}
  A decision $d\in D$ is called a \emph{maximal decision} for the decision
  problem $(\pspace,D,\credal,L)$ if $d$ belongs to the set
  \begin{equation*}
    \max(\pspace,D,\credal,L)
    =\left\{
      d\in D\colon
      (\forall e\in D)(\exists P\in\credal)
      \left(E_P(d)\ge E_P(e)\right)
    \right\}
  \end{equation*}
\end{definition}

Denote by $\cohull(\credal)$ the convex hull of $\credal$.
Obviously it holds that
$$\max(\pspace,D,\credal,L)=\max(\pspace,D,\cohull(\credal),L)$$ because
for any $\lambda\in[0,1]$ and any two $P$ and $Q$ in $\credal$,
the inequalities $E_P(d)\ge E_P(e)$ and $E_Q(d)\ge
E_Q(e)$ imply the inequality
$$E_{\lambda P+(1-\lambda) Q}(d)\ge E_{\lambda P+(1-\lambda) Q}(e)$$
This does not hold for optimality as defined in Definition~\ref{def:optimality}: assuming $\pspace$ finite, for any two distinct subsets
$\credal$ and $\credal'$ of $\probabilities$, we can always find a
set $D$ and a loss function $L$ such that
$\opt(\pspace,D,\credal,L)\neq\opt(\pspace,D,\credal',L)$ (see Kadane,
Schervish, and Seidenfeld \cite[Thm.~1,
p.~53]{2004:seidenfeld::rubinesque}).

To understand why the above notion of optimality is called maximality,
consider the strict partial ordering $>$ on $D$ defined by
\begin{equation*}
  e>d \iff (\forall P\in\credal)
      \left(E_P(e)>E_P(d)\right)
\end{equation*}
for any $d$ and $e$ in $D$, that is, $e$ is strictly preferred to $d$
if $e$ is strictly preferred to $d$ with respect to every
$P\in\credal$.
Then,
\begin{equation*}
  \max(\pspace,D,\credal,L)
  =\left\{
    d\in D\colon
    (\forall e\in D)(e\not> d)
  \right\}
\end{equation*}
so $\max(\pspace,D,\credal,L)$ elects those decisions $d$ which are undominated with respect to $>$. Therefore, maximality can be expressed through pairwise
preferences only---again in contrast to
$\opt(\pspace,D,\credal,L)$ as for instance demonstrated by Kadane, Schervish, and
Seidenfeld \cite[Sec.~4, p.~51]{2004:seidenfeld::rubinesque}.

However, because
\begin{equation*}
  \opt(\pspace,D,\credal,L)
  \subseteq
  \max(\pspace,D,\credal,L)
\end{equation*}
we may interpret $\max(\pspace,D,\credal,L)$ as an approximation to
$\opt(\pspace,D,\credal,L)$, an approximation which discards all
preferences but the pairwise ones.

Let us admit a relative error on the choice function $\max$
as well.  Recall, $R_D=\sup_{d\in D}[\sup f_d-\inf f_d]$.

\begin{definition}
  Let $\epsilon\ge 0$. A decision $d$ in $D$ is called an
  \emph{$\epsilon$-maximal decision} for the decision problem
  $(\pspace,D,\credal,L)$ if it belongs to the set
  \begin{equation*}
    {\max}^\epsilon(\pspace,D,\credal,L)
    =
    \{
      d\in D\colon 
      (\forall e\in D)(\exists P\in\credal)
      (
        E_P(e)-E_P(d)
        \le\epsilon R_D
      )
    \}
  \end{equation*}
\end{definition}

\subsection{Approximating Extreme Points}

It turns out that we can restrict our attention to the extreme points
of the closure of the convex hull of $\credal$, with respect to the
topology of pointwise convergence on members of $\gambles$. This topology is characterised by the following notion of convergence: for every directed set $(A,\le)$ and every net $(P_\alpha)_{\alpha\in A}$, we have that
$\lim_\alpha P_\alpha=P$ if
\begin{equation*}
  \lim_\alpha E_{P_\alpha}(f)=E_P(f)\text{ for all }f\in\gambles
\end{equation*}
Without further mention, I will assume this topology on $\probabilities$. See
for instance \cite{1997:schechter} for more information regarding
nets \cite[Chapter~7]{1997:schechter} and
this topology \cite[\S 28.15]{1997:schechter}.

There is a nice
connection between the closure of $\credal$, denoted by $\closure(\credal)$, and $\epsilon$-optimality and $\epsilon$-maximality.

\begin{lemma}\label{lem:credal:closure}
  Assume that $R_D>0$. Let $\epsilon\ge 0$. For any decision problem
  $(\pspace,D,\credal,L)$, the following equality holds:
  \begin{equation}
    \label{eq:lem:credal:closure:max}
    {\max}^{\epsilon}(\pspace,D,\closure(\credal),L)
    =
    \bigcap_{\delta>0}{\max}^{\epsilon+\delta}(\pspace,D,\credal,L)
  \end{equation}
  and if additionally $D$ is finite, then the following equality holds as well:
  \begin{equation}
    \label{eq:lem:credal:closure:opt}
    {\opt}^{\epsilon}(\pspace,D,\closure(\credal),L)
    =
    \bigcap_{\delta>0}{\opt}^{\epsilon+\delta}(\pspace,D,\credal,L)
  \end{equation}
\end{lemma}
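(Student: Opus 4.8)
The plan is to prove each of the two equalities by establishing both inclusions, treating $\max^{\epsilon}$ and $\opt^{\epsilon}$ in parallel since the arguments are structurally identical. The single fact that drives everything is that, by the very definition of the topology on $\probabilities$, each evaluation functional $P\mapsto E_P(f)$ is continuous for every fixed $f\in\gambles$; in particular $P\mapsto E_P(f_e)$ and $P\mapsto E_P(f_d)$ are continuous, and (recalling that $\closure(\credal)$ is exactly the set of limits of nets drawn from $\credal$) this lets me transport the defining inequalities back and forth between $\credal$ and $\closure(\credal)$, where the slack $\delta$ absorbs the passage to the limit.

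For the inclusions $\subseteq$, I would start from a decision $d$ in the left-hand set and, for a fixed competitor $e$, take the witnessing charge $P\in\closure(\credal)$ together with a net $(P_\alpha)$ in $\credal$ with $P_\alpha\to P$. Continuity of the evaluation functionals gives $E_{P_\alpha}(f_e)-E_{P_\alpha}(f_d)\to E_P(f_e)-E_P(f_d)\le\epsilon R_D$, so for $\alpha$ large enough the difference is at most $(\epsilon+\delta)R_D$; the corresponding $P_\alpha\in\credal$ then witnesses membership in ${\max}^{\epsilon+\delta}(\pspace,D,\credal,L)$. Since $\delta>0$ and $e$ were arbitrary, $d$ lies in the intersection. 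For $\opt$ the only change is that the relevant functional is $P\mapsto\sup_{e\in D}E_P(f_e)-E_P(f_d)$, which I must argue is continuous; this is exactly where finiteness of $D$ enters, since a supremum over finitely many continuous functions is a maximum and hence continuous.

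For the reverse inclusions $\supseteq$, I would take $d$ in the intersection on the right. For $\max$, fix $e$; for each $\delta>0$ the hypothesis supplies a $Q_\delta\in\credal$ with $E_{Q_\delta}(f_e)-E_{Q_\delta}(f_d)\le(\epsilon+\delta)R_D$. Viewing $(Q_\delta)_{\delta>0}$ as a net directed by decreasing $\delta$, I would extract a convergent subnet $Q_{\delta_\beta}\to P$; then $P\in\closure(\credal)$ because it is a net-limit of points of $\credal$, and passing the inequality to the limit (using $\delta_\beta\to 0$ along the subnet together with continuity of the evaluations) yields $E_P(f_e)-E_P(f_d)\le\epsilon R_D$, so $P$ witnesses the required clause for this $e$. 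The $\opt$ case is identical with the functional $P\mapsto\sup_{e\in D}E_P(f_e)-E_P(f_d)$, again relying on $D$ finite for continuity.

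The crux on which the $\supseteq$ inclusions rest is the existence of the convergent subnet, that is, compactness of $\probabilities$ in this topology. The plan is to invoke this as the main technical ingredient: the topology of pointwise convergence on $\gambles$ is the weak${}^*$ topology that $\probabilities$ inherits as a subset of the dual $\bcharges$ of the sup-normed Banach space $\gambles$, and $\probabilities$ is a weak${}^*$-closed subset of the unit ball (the conditions $E_P(1_A)\ge 0$ for all $A$ and $E_P(1)=1$ are preserved under the evaluation functionals), hence weak${}^*$-compact by Banach--Alaoglu; equivalently one embeds $\probabilities$ into the compact product $\prod_{f\in\gambles}[\inf f,\sup f]$ and checks it is closed. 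I expect this compactness step, together with the care needed to phrase the decreasing-$\delta$ argument correctly as a net and its subnet, to be the main obstacle, whereas the continuity bookkeeping and the role of the finiteness of $D$ are routine.
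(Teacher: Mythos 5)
Your proposal is correct and follows essentially the same route as the paper: the forward inclusions via nets in $\credal$ converging to the witnessing charge (with finiteness of $D$ entering for $\opt$ exactly as you say, since the paper's choice of a common index $\alpha_\delta$ by directedness is equivalent to your continuity of a finite supremum), and the reverse inclusions via weak${}^*$ compactness of $\probabilities$ by Banach--Alaoglu. Your extraction of a convergent subnet of the net $(Q_\delta)_{\delta>0}$ directed by decreasing $\delta$ is just a repackaging of the paper's argument, which instead takes witnesses $P_n$ with $E_{P_n}(e)-E_{P_n}(d)\le 1/n+\epsilon R_D$ and obtains a cluster point through the finite intersection property of the sets $\closure(\{P_n\colon n\ge m\})$.
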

\begin{proof}
  We start with proving Eq.~\eqref{eq:lem:credal:closure:max}.

  Assume $d\in{\max}^{\epsilon}(\pspace,D,\closure(\credal),L)$. Consider any
  $e\in D$. By assumption, there is a $P\in\closure(\credal)$ such
  that $E_P(e)-E_P(d)\le R_D\epsilon$. Because $P\in\closure(\credal)$, there is a net
  $(P_\alpha\in\credal)_{\alpha\in A}$ such that $\lim_\alpha E_{P_\alpha}(f)=E_P(f)$
  for all gambles $f$. It follows that $\lim_\alpha E_{P_\alpha}(e)-\lim_\alpha
  E_{P_\alpha}(d)\le R_D\epsilon$. This implies that for every $\delta>0$, there is an
  $\alpha\in A$ such that $E_{P_\alpha}(e)-E_{P_\alpha}(f)\le(\epsilon+\delta) R_D$. So, for every $\delta>0$, there is a $P\in\credal$ such
  that $E_{P}(e)-E_{P}(f)\le(\epsilon+\delta)R_D$. Whence, because this holds for any $e\in D$,
  $d\in{\max}^{\epsilon+\delta}(\pspace,D,\credal,L)$ for all $\delta>0$,
  and therefore, $d\in\bigcap_{\delta>0}{\max}^{\epsilon+\delta}(\pspace,D,\credal,L)$.

  Conversely, assume $d\in\bigcap_{\delta>
    0}{\max}^{\epsilon+\delta}(\pspace,D,\credal,L)$. Consider any $e\in D$.
  Then, for all $\delta>0$, there is a $P_\delta\in\credal$ such
  that $E_{P_\delta}(e)-E_{P_\delta}(f)\le(\epsilon+\delta) R_D$. Hence,
  for all $n\in\SetN$, there is a $P_n\in\credal$ such that
  \begin{equation}
    \label{eq:lem:credal:closure:1}
    E_{P_n}(e)-E_{P_n}(d)\le 1/n+\epsilon R_D
  \end{equation} 
  For any $m\in\SetN$, consider the
  following closed subset of $\probabilities$:
  \begin{equation*}
    \mathcal{R}_m=\closure(\{P_n\colon n\ge m\})
  \end{equation*}
  The collection $\{\mathcal{R}_m\colon m\in\SetN\}$
  satisfies the finite intersection property. By the
  Banach-Alaoglu-Bourbaki theorem \cite[\S 28.29(UF26)]{1997:schechter}
  $\probabilities$ is compact, and hence
  \begin{equation*}
    \mathcal{R}=\cap_{m\in\SetN}\mathcal{R}_m
  \end{equation*}
  is non-empty as well \cite[\S 17.2]{1997:schechter}.

  Take any $R\in\mathcal{R}$. Since each $P_n\in\credal$, it follows
  that each $\mathcal{R}_m\subseteq\closure(\credal)$, and hence
  $R\in\closure(\credal)$. If we can show that $E_R(e)-E_R(d)\le\epsilon R_D$,
  then $d\in{\max}^{\epsilon}(\pspace,D,\closure(\credal),L)$ is established.

  Indeed, fix $m\in\SetN$. Because $R\in\mathcal{R}_m$, there is a
  net $(P_{n_\alpha})_{\alpha\in A}$ in $\{P_n\colon n\ge m\}$---so
  $n_\alpha\ge m$, but $n_\alpha$ is not necessarily an increasing function of
  $\alpha$---such that $\lim_\alpha E_{P_{n_\alpha}}(f_e-f_d)=E_R(f_e-f_d)$.  Hence, for
  each $\gamma>0$, there is an $\alpha\in A$ such that $E_R(e)-E_R(d)\le
  E_{P_{n_\alpha}}(e)-E_{P_{n_\alpha}}(d)+\gamma$, and therefore by Eq.~\eqref{eq:lem:credal:closure:1}, $E_R(e)-E_R(d)\le
  1/n_\alpha+\epsilon R_D + \gamma$. Because this inequality holds for every $m$ and every
  $\gamma>0$, and $n_\alpha\ge m$, it follows that $E_R(e)-E_R(d)\le\epsilon R_D$.

  Let us now prove Eq.~\eqref{eq:lem:credal:closure:opt}, under the
  additional assumption that $D$ is finite. The proof goes along
  similar lines as the one for Eq.~\eqref{eq:lem:credal:closure:max}.

  Assume $d\in{\opt}^{\epsilon}(\pspace,D,\closure(\credal),L)$.  By
  assumption, there is a $P\in\closure(\credal)$ such that
  $E_P(e)-E_P(d)\le R_D\epsilon$ for every $e\in D$. Because
  $P\in\closure(\credal)$, there is a net
  $(P_\alpha\in\credal)_{\alpha\in A}$ such that $\lim_\alpha E_{P_\alpha}(f)=E_P(f)$
  for all gambles $f$. In particular, there is a net
  $(P_\alpha\in\credal)_{\alpha\in A}$ such that $\lim_\alpha E_{P_\alpha}(e)-\lim_\alpha
  E_{P_\alpha}(d)\le R_D\epsilon$ for every $e\in D$.  So, for every $e\in
  D$ and $\delta>0$, there is an $\alpha_{e,\delta}\in A$ such that
  $E_{P_\alpha}(e)-E_{P_\alpha}(f)\le(\epsilon+\delta) R_D$ for all $\alpha\ge
  \alpha_{e,\delta}$. Because $D$ is finite, there is an $\alpha_\delta$ such that
  $\alpha_\delta\ge \alpha_{e,\delta}$ for all $e\in D$. Hence, for every $\delta>0$,
  there is a $\alpha_\delta\in A$ such that
  $E_{P_{\alpha_\delta}}(e)-E_{P_{\alpha_\delta}}(f)\le(\epsilon+\delta) R_D$
  for every $e\in D$. Whence, because $P_{\alpha_\delta}\in\credal$, it follows that
  $d\in{\opt}^{\epsilon+\delta}(\pspace,D,\credal,L)$ for all $\delta>0$,
  and therefore, $d\in\bigcap_{\delta>0}{\opt}^{\epsilon+\delta}(\pspace,D,\credal,L)$.

  Conversely, assume $d\in\bigcap_{\delta>
    0}{\opt}^{\epsilon+\delta}(\pspace,D,\credal,L)$.
  Then, for all $\delta>0$, there is a $P_\delta\in\credal$ such
  that $E_{P_\delta}(e)-E_{P_\delta}(f)\le(\epsilon+\delta) R_D$ for every $e\in D$. Hence,
  for all $n\in\SetN$, there is a $P_n\in\credal$ such that for every $e\in D$
  \begin{equation}
    \label{eq:lem:credal:closure:2}
    E_{P_n}(e)-E_{P_n}(d)\le 1/n+\epsilon R_D
  \end{equation} 
  Now choose any $R$ in
  \begin{equation*}
    \mathcal{R}=\cap_{m\in\SetN}\closure(\{P_n\colon n\ge m\})
  \end{equation*}
  Similarly as before, it can be established that $\mathcal{R}$ is non-empty
  and that $R\in\closure(\credal)$. If we can show that
  $E_R(e)-E_R(d)\le\epsilon R_D$ for all $e\in D$, then
  $d$ indeed belongs to ${\opt}^{\epsilon}(\pspace,D,\closure(\credal),L)$ and
  the desired result is established.

  Indeed, because $R\in\closure(\{P_n\colon n\ge m\})$, for every $e\in D$, there is a
  net $(P_{n_{\alpha,e}})_{\alpha\in A}$ in $\{P_n\colon n\ge m\}$---so
  $n_{\alpha,e}\ge m$---such that $\lim_\alpha E_{P_{n_{\alpha,e}}}(f_e-f_d)=E_R(f_e-f_d)$.  Hence, for every $e\in D$ and every
  $\gamma>0$, there is an $\alpha\in A$ such that $E_R(e)-E_R(d)\le
  E_{P_{n_{\alpha,e}}}(e)-E_{P_{n_{\alpha,e}}}(d)+\gamma$, and therefore by Eq.~\eqref{eq:lem:credal:closure:2}, $E_R(e)-E_R(d)\le
  1/n_{\alpha,e}+\epsilon R_D + \gamma$. Because this inequality holds for every $m$ and every
  $\gamma>0$, and $n_{\alpha,e}\ge m$, it follows that $E_R(e)-E_R(d)\le\epsilon R_D$ for every $e\in D$.
\end{proof}

In particular, assuming $R_D>0$, if for any $\delta>\epsilon>0$
\begin{equation*}
    {\max}^{\epsilon}(\pspace,D,\credal,L)
    =
    {\max}^{\delta}(\pspace,D,\credal,L)
\end{equation*}
then
\begin{equation*}
  {\max}^{\epsilon}(\pspace,D,\credal,L)
  =
  {\max}^{\epsilon}(\pspace,D,\closure(\credal),L)
\end{equation*}
A similar result holds for the ${\opt}^{\epsilon}$ operator for finite $D$.

As a special case, Lemma~\ref{lem:credal:closure} implies an
interesting connection between maximality and $\epsilon$-maximality:
\begin{corollary}\label{cor:credal:closure:maximality}
  Assume that $R_D>0$. For any decision problem
  $(\pspace,D,\credal,L)$, the following equality holds:
  \begin{equation*}
    {\max}(\pspace,D,\closure(\credal),L)
    =
    \bigcap_{\epsilon>0}{\max}^{\epsilon}(\pspace,D,\credal,L)
  \end{equation*}  
\end{corollary}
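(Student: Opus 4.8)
The plan is to obtain this corollary as an immediate specialisation of Lemma~\ref{lem:credal:closure}. The crucial preliminary observation is that maximality coincides with $0$-maximality, i.e.\ that ${\max}(\pspace,D,\credal,L)={\max}^0(\pspace,D,\credal,L)$. This is simply a matter of unfolding definitions: setting $\epsilon=0$ in the definition of ${\max}^\epsilon$ turns the defining condition $E_P(e)-E_P(d)\le\epsilon R_D$ into $E_P(e)-E_P(d)\le 0$, that is, $E_P(d)\ge E_P(e)$, which is precisely the condition appearing in the definition of ${\max}$. The same identity holds for every credal set, and in particular for $\closure(\credal)$ in place of $\credal$.

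With this identification in hand, I would simply instantiate Eq.~\eqref{eq:lem:credal:closure:max} of Lemma~\ref{lem:credal:closure} at $\epsilon=0$, which is permitted since the lemma requires only $\epsilon\ge 0$ and $R_D>0$ (both assumed here). The left-hand side then becomes ${\max}^0(\pspace,D,\closure(\credal),L)={\max}(\pspace,D,\closure(\credal),L)$, while the right-hand side becomes $\bigcap_{\delta>0}{\max}^{0+\delta}(\pspace,D,\credal,L)=\bigcap_{\delta>0}{\max}^{\delta}(\pspace,D,\credal,L)$. Relabelling the dummy index $\delta$ as $\epsilon$ yields exactly the claimed equality.

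Since the entire argument reduces to a substitution, there is essentially no obstacle to overcome; all the genuine work---namely the net-and-compactness argument producing an accumulation point $R\in\closure(\credal)$ that witnesses the reverse inclusion---has already been carried out in the proof of Lemma~\ref{lem:credal:closure}. The only points that warrant a moment's care are verifying that the lemma's hypotheses genuinely cover the boundary case $\epsilon=0$, and that the identification ${\max}={\max}^0$ is exact rather than merely approximate; both are immediate from the stated definitions.
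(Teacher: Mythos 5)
Your proof is correct and matches the paper's own (implicit) argument: the paper presents this corollary precisely as the special case $\epsilon=0$ of Eq.~\eqref{eq:lem:credal:closure:max} in Lemma~\ref{lem:credal:closure}, relying on the same identification ${\max}={\max}^0$ that you spell out. Your additional care in checking that the lemma's hypotheses admit $\epsilon=0$ and that the identification is exact is exactly the right diligence, and nothing more is needed.
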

Again, a similar result holds for optimality and $\epsilon$-optimality, in
case $D$ is finite.

In the following theorem, assume that $0<\epsilon<1/2$.

\begin{theorem}\label{thm:approx:extpts}
  Consider two decision problems $(\pspace,D,\credal,L)$ and
  $(\partition,D,\hat{\credal},\hat{L})$. Assume that $R_D>0$.
  If $L\sim_\epsilon\hat{L}$ and
  $\ext(\closure(\cohull(\credal)))\sim_\delta\hat{\credal}$ then, 
  for any $\gamma\ge 0$,
  \begin{align}
    \label{eq:thm:approx:extpts:1}
    {\max}^\gamma(\pspace,D,\credal,L)
    &\subseteq
    \bigcap_{\eta>0}{\max}^{\eta+\frac{\gamma}{1-2\epsilon}+2(\frac{\epsilon}{1-2\epsilon}+\delta)}%
    (\partition,D,\hat{\credal},\hat{L})
    \\
    \label{eq:thm:approx:extpts:2}
    {\max}^\gamma(\partition,D,\hat{\credal},\hat{L})
    &\subseteq
    \bigcap_{\eta>0}{\max}^{\eta+\gamma(1+2\epsilon)+2(\epsilon+\delta(1+2\epsilon))}(\pspace,D,\credal,L)
  \end{align}
\end{theorem}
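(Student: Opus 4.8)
The plan is to reduce maximality over $\credal$ to maximality over the extreme points $E=\ext(\closure(\cohull(\credal)))$, and then to transport the approximation bounds of Theorem~\ref{thm:approx:opt} to the pairwise setting. As a first, routine step I would record that $\epsilon$-maximality is invariant under taking convex hulls, i.e.\ ${\max}^\gamma(\pspace,D,\credal,L)={\max}^\gamma(\pspace,D,\cohull(\credal),L)$ for every $\gamma\ge 0$. This follows exactly as in the displayed argument given for ordinary maximality: for fixed $d$ and $e$ the map $P\mapsto E_P(f_e)-E_P(f_d)$ is affine, so its value at a convex combination $\sum_i\lambda_i P_i$ is the corresponding convex combination of its values; hence a witness $P\in\cohull(\credal)$ with $E_P(f_e)-E_P(f_d)\le\gamma R_D$ forces some $P_i\in\credal$ to satisfy the same inequality.

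The key structural step is the passage to extreme points. Writing $K=\closure(\cohull(\credal))$, I would use that $\probabilities$ is compact (Banach--Alaoglu--Bourbaki, exactly as invoked in Lemma~\ref{lem:credal:closure}) and that the topology of pointwise convergence is locally convex; then $K$ is compact and convex, and the Krein--Milman theorem gives $K=\closure(\cohull(E))$ with $E=\ext(K)$. Thus $\credal$ and $E$ have the \emph{same} closed convex hull. Applying Eq.~\eqref{eq:lem:credal:closure:max} of Lemma~\ref{lem:credal:closure} to $\cohull(\credal)$ and to $\cohull(E)$, and using the convex-hull invariance just established, I get that both $\bigcap_{\eta>0}{\max}^{\gamma+\eta}(\pspace,D,\credal,L)$ and $\bigcap_{\eta>0}{\max}^{\gamma+\eta}(\pspace,D,E,L)$ coincide with ${\max}^{\gamma}(\pspace,D,K,L)$, and are therefore equal to each other. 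This is precisely the device that replaces the possibly non-convex, non-closed set $\credal$ by its extreme points $E$, at the cost of the intersection $\bigcap_{\eta>0}$ appearing in the statement.

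Next I would prove the maximality analogue of Theorem~\ref{thm:approx:opt} for the pair $(E,\hat{\credal})$: if $L\sim_\epsilon\hat{L}$ and $E\sim_\delta\hat{\credal}$ then, for every $\gamma'\ge 0$, the inclusion ${\max}^{\gamma'}(\pspace,D,E,L)\subseteq{\max}^{\gamma'/(1-2\epsilon)+2(\epsilon/(1-2\epsilon)+\delta)}(\partition,D,\hat{\credal},\hat{L})$ holds, together with the reverse inclusion ${\max}^{\gamma'}(\partition,D,\hat{\credal},\hat{L})\subseteq{\max}^{\gamma'(1+2\epsilon)+2(\epsilon+\delta(1+2\epsilon))}(\pspace,D,E,L)$. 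Since maximality is a pairwise condition (``for all $e$ there exists $P$''), the argument runs per decision $e$ exactly as in Theorem~\ref{thm:approx:opt}: take the witnessing $P\in E$ (respectively $\hat{P}\in\hat{\credal}$), use $E\sim_\delta\hat{\credal}$ to produce a matching $\hat{P}$ (respectively $P$) with $P\sim_\delta\hat{P}$, and apply the two bounds of Lemma~\ref{lem:approx:expectation} together with $R_D(1-2\epsilon)\le\hat{R}_D\le R_D(1+2\epsilon)$ to convert between the scales $R_D$ and $\hat{R}_D$.

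Finally I would assemble the two directions. For Eq.~\eqref{eq:thm:approx:extpts:1}, starting from $d\in{\max}^\gamma(\pspace,D,\credal,L)\subseteq\bigcap_{\eta>0}{\max}^{\gamma+\eta}(\pspace,D,\credal,L)=\bigcap_{\eta>0}{\max}^{\gamma+\eta}(\pspace,D,E,L)$, I apply the forward bound of the previous paragraph to each $\eta$ and reparametrise $\eta\mapsto(1-2\epsilon)\eta$, landing in the stated intersection. For Eq.~\eqref{eq:thm:approx:extpts:2} I run the reverse bound from $\hat{\credal}$ into ${\max}^{\gamma(1+2\epsilon)+2(\epsilon+\delta(1+2\epsilon))}(\pspace,D,E,L)$ and then invoke the equality of the two $\bigcap_{\eta>0}$ families to transfer the conclusion back to $\credal$. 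The main obstacle is the structural step of the second paragraph: one must justify that maximality genuinely depends only on the closed convex hull, which requires compactness of $\probabilities$ and Krein--Milman, and one must accept the intersection $\bigcap_{\eta>0}$ as the unavoidable trace of Lemma~\ref{lem:credal:closure} when moving between a set and its closed convex hull; the approximation estimates themselves are a direct replay of Theorem~\ref{thm:approx:opt}.
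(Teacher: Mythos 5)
Your proposal is correct and follows essentially the same route as the paper's proof: convex-hull invariance of ${\max}^\gamma$, compactness of $\probabilities$ (Banach--Alaoglu--Bourbaki) plus Krein--Milman to identify $\closure(\cohull(\credal))$ with the closed convex hull of $\ext(\closure(\cohull(\credal)))$, Lemma~\ref{lem:credal:closure} to trade the closure for the intersection $\bigcap_{\eta>0}$, and a pairwise replay of the estimates of Theorem~\ref{thm:approx:opt} using Lemma~\ref{lem:approx:expectation}. The differences are only presentational: you phrase the structural step as an equality of the two $\bigcap_{\eta>0}$ families rather than as a chain of inclusions, and you make explicit two details the paper leaves implicit, namely the per-decision (rather than supremum-based) adaptation of Theorem~\ref{thm:approx:opt}'s argument to the maximality setting and the reparametrisation $\eta\mapsto\eta/(1-2\epsilon)$ inside the intersection.
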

\begin{proof}
  First, note that
  \begin{align*}
    {\max}^\gamma(\pspace,D,\credal,L)
    &=
    {\max}^\gamma(\pspace,D,\cohull(\credal),L)
    \\
    &\subseteq
    {\max}^{\gamma}(\pspace,D,\closure(\cohull(\credal)),L)
    \\
    \intertext{and by convexity of $\closure(\cohull(\credal))$ \cite[\S
      26.23]{1997:schechter} and the Krein-Milman theorem
      \cite[p.~74]{1975:holmes}, the closed convex hull of
      $\ext(\closure(\cohull(\credal)))$ is $\closure(\cohull(\credal))$, so}
    &=
    {\max}^{\gamma}(\pspace,D,\closure(\cohull(\ext(\closure(\cohull(\credal))))),L)
    \\
    \intertext{and now by Corollary~\ref{cor:credal:closure:maximality},}
    &=
    \cap_{\eta>0}{\max}^{\gamma+\eta}(\pspace,D,\cohull(\ext(\closure(\cohull(\credal)))),L)
    \\
    &=
    \cap_{\eta>0}{\max}^{\gamma+\eta}(\pspace,D,\ext(\closure(\cohull(\credal))),L)
  \end{align*}
  Now apply the same argument as in the proof of
  Theorem~\ref{thm:approx:opt} to recover
  Eq.~\eqref{eq:thm:approx:extpts:1}.

  To establish Eq.~\eqref{eq:thm:approx:extpts:2},
  again use the same argument as in the proof of
  Theorem~\ref{thm:approx:opt},
  \begin{align*}
    {\max}^\gamma(\partition,D,\hat{\credal},\hat{L})
    &\subseteq
    {\max}^{\gamma(1+2\epsilon)+2(\epsilon+\delta(1+2\epsilon))}(\pspace,D,\ext(\closure(\cohull(\credal))),L)
    \\
    &\subseteq
    {\max}^{\gamma(1+2\epsilon)+2(\epsilon+\delta(1+2\epsilon))}(\pspace,D,\closure(\cohull(\ext(\closure(\cohull(\credal))))),L)
    \\
    \intertext{and again by the Krein-Milman theorem
      \cite[p.~74]{1975:holmes}, the closed convex hull of
      $\ext(\closure(\cohull(\credal)))$ is $\closure(\cohull(\credal))$, so}
    &=
    {\max}^{\gamma(1+2\epsilon)+2(\epsilon+\delta(1+2\epsilon))}(\pspace,D,\closure(\cohull(\credal)),L)
    \\
    &=
    \bigcap_{\eta>0}{\max}^{\eta+\gamma(1+2\epsilon)+2(\epsilon+\delta(1+2\epsilon))}(\pspace,D,\cohull(\credal),L)
    \\
    &=
    \bigcap_{\eta>0}{\max}^{\eta+\gamma(1+2\epsilon)+2(\epsilon+\delta(1+2\epsilon))}(\pspace,D,\credal,L)
  \end{align*}
\end{proof}

Again, if we ignore higher order terms in $\gamma$, $\epsilon$, and $\delta$,
then the above theorem says that when moving from the original decision
problem to the approximate decision problem,
with relative error $\epsilon$ in gambles and relative error $\delta$
in probabilities, the relative error in maximality increases by
$2(\epsilon+\delta)$. Hence, for small $\epsilon$ and $\delta$
the following holds, up to a small error: 
if $L\sim_\epsilon\hat{L}$ and $\ext(\closure(\cohull(\credal)))\sim_\delta\hat{\credal}$, then
\begin{equation*}
    \max(\pspace,D,\credal,L)
    \subseteq
    {\max}^{2(\epsilon+\delta)}(\partition,D,\hat{\credal},\hat{L})
    \subseteq
    {\max}^{4(\epsilon+\delta)}(\pspace,D,\credal,L)
\end{equation*}
Again,
${\max}^{2(\epsilon+\delta)}(\partition,D,\hat{\credal},\hat{L})$ seems
a logical choice when calculating maximal decisions in practice.

\section{Conclusion and Remarks}
\label{sec:conclusion}

With this paper, I hope to have consolidated at least part of our
every day intuition when approximating decision problems involving
sets of probabilities, for instance when those problems have to be
solved by computer.

One result is quite depressing:
Lemma~\ref{lem:approx:gambles} and Lemma~\ref{lem:approx:probabilities} seem to tell us that
except in the simplest cases, any approximation will
need too many resources to be of any practical value, as demonstrated by
Table~\ref{tab:partition} and Table~\ref{tab:credal}.

Fortunately, not all is lost. 
If we resort to pairwise comparison, we may restrict ourselves to the
extreme points of the closure of the convex hull of the credal set,
which can be \emph{much} smaller than the original credal set.
Closing the credal set only has an arbitrary small effect on
maximality, and in part for this reason, it turns out that
approximating extreme points suffices when restricting to pairwise
preference.

I wish to emphasise that the bounds on the cardinalities of
the approximating partition and the approximating credal set are only
upper bounds under very weak assumptions. These bounds are only
attained in extreme situations. In many cases the credal set and the
loss function have additional structure which may allow for much lower
upper bounds.

In case the problem has sufficient structure, an alternative approach is to develop algorithms which do not need to
traverse the complete credal set (or an approximation thereof) to
compute the optimal solution. The imprecise Dirichlet model has
already been given considerable attention in this direction
\cite{2003:hutter:estimation}.

Obermeier and Augustin \cite{2007:obermeier::approx} have described a
method to approximate decision problems by applying Luce\~{n}os'
adaptive discretisation method to either all elements of the credal
set (so the partition varies with the distribution), or on a reference
distribution of that set. This type of approximation aims to preserve
the first $r$ moments of a distribution. Although precise convergence
results and bounds on the precision of this approximation have not yet
been proven, examples have shown that this method can yield good
results in practice.

Finally, another approach could consist of sampling elements from the credal set,
for instance through Monte-Carlo techniques, and solve a classical
decision problem for each of these elements. If the sample
$s$ from $\hat{\credal}$ is large enough, then---since $\bigcup_{P\in s}\opt(\partition,D,P,L)=\opt(\partition,D,s,L)$---hopefully
\begin{equation*}
  \opt(\partition,D,\credal,L)=\bigcup_{P\in s}\opt(\partition,D,P,L)
\end{equation*}
The question how large a sample we need to ensure convergence is
definitely worth further investigation.

\section*{Acknowledgements}

I am grateful to Teddy Seidenfeld for the many helpful discussions on
issues related to this paper, and also for encouraging me to extend my view on approximations to
choice functions. I thank Max Jensen for his help in characterising
the discretisation of the simplex in $\SetR^n$, presented in the
appendix.
I also thank all three referees for their constructive comments and useful suggestions which have improved the presentation of this paper.
The research reported in this paper has been supported in part by the
Belgian American Educational Foundation.

\appendix

\section{Discretisation Of The Standard Simplex In $\SetR^n$}

In this appendix a simple discretisation of $\Delta^n$, the standard simplex in $\SetR^n$,
is studied---these results are not new and are in fact related to well known notions
from combinatorics, in particular multisets \cite{1997:stanley}. The standard simplex $\Delta^n$ is defined as
\begin{equation*}
  \Delta^n=\left\{\vec{x}\in\SetR^n\colon \vec{x}\ge 0,\,|\vec{x}|_1=1\right\}
\end{equation*}
where $|\cdot|_1$ denotes the $1$-norm, i.e.\
$|\vec{x}|_1=\sum_{i=1}^n|x_i|$.

For any non-zero natural number $N$, let
$\Delta^n_N$ denote the following finite subset of $\Delta^n$:
\begin{equation*}
  \Delta^n_N=\left\{\vec{m}/N\colon \vec{m}\in\SetN^n,\,|\vec{m}|_1=N\right\}
\end{equation*}
(above, $\SetN$ is the set of natural numbers including $0$).

\begin{lemma}\label{lem:simplex:cardinality}
  The cardinality of $\Delta^n_N$ is $\binom{N+n-1}{N}$.
\end{lemma}
\begin{proof}
  There is an obvious one-to-one and onto correspondence between
  $\Delta^n_N$ and all multisets of cardinality $N$ with elements
  taken from $\{1,\dots,n\}$---for any $\vec{m}/N\in\Delta^n_N$,
  interpret $m_i$ as the multiplicity of $i$. The number of all such
  multisets is precisely $\binom{N+n-1}{N}$ (see Stanley
  \cite{1997:stanley}).
\end{proof}

\begin{lemma}\label{lem:simplex:approx}
  For every $\vec{x}$ in $\Delta^n$ there is a
  $\vec{y}$ in $\Delta^n_N$ such that
  \begin{equation*}
    |\vec{x}-\vec{y}|_1<n/N
  \end{equation*}
\end{lemma}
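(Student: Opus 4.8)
The plan is to approximate $\vec{x}$ by rounding the scaled vector $N\vec{x}$ to integer coordinates. Writing $\vec{y}=\vec{m}/N$, we have $|\vec{x}-\vec{y}|_1=\frac1N\sum_{i=1}^n|Nx_i-m_i|$, so it suffices to produce nonnegative integers $m_i$ with $\sum_i m_i=N$ and $\sum_i|Nx_i-m_i|<n$. Since $\sum_i Nx_i=N$, the total mass to be distributed is exactly $N$, and the task reduces to integer rounding under a sum constraint.

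First I would floor each coordinate, setting $m_i^{(0)}=\lfloor Nx_i\rfloor$ and $f_i=Nx_i-m_i^{(0)}\in[0,1)$. The deficit $k=N-\sum_i m_i^{(0)}$ equals $\sum_i f_i$; because each $f_i<1$, this is a nonnegative integer strictly less than $n$. I would then repair the deficit by incrementing exactly $k$ of the coordinates by one, choosing them among the indices with $f_i>0$. A short counting argument guarantees this is possible: if only $j$ of the $f_i$ were positive, then $\sum_i f_i<j$, forcing $j>k$, so there are at least $k+1$ indices with positive fractional part. Setting $m_i=m_i^{(0)}+1$ on the chosen set and $m_i=m_i^{(0)}$ elsewhere yields $\sum_i m_i=N$ with every $m_i\ge 0$, so $\vec{y}=\vec{m}/N\in\Delta^n_N$.

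Finally I would bound the error termwise. A non-incremented coordinate contributes $|Nx_i-m_i|=f_i<1$, and an incremented coordinate contributes $|Nx_i-m_i|=1-f_i<1$, the latter strict precisely because the increments were placed only where $f_i>0$. Summing $n$ terms each strictly below $1$ gives $\sum_i|Nx_i-m_i|<n$, and dividing by $N$ yields $|\vec{x}-\vec{y}|_1<n/N$. The one delicate point---and the only place where care is needed to obtain the strict inequality rather than merely $\le$---is the counting argument ensuring the $k$ increments can always be assigned to coordinates with strictly positive fractional part; everything else is routine bookkeeping of the floor operation and the sum constraint.
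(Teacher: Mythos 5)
Your proof is correct and takes essentially the same route as the paper's: floor the scaled coordinates $Nx_i$, distribute the integer deficit $N-\sum_i\lfloor Nx_i\rfloor$ by unit increments, and bound the $\ell_1$ error termwise. The only difference is bookkeeping: the paper increments the first $N-M$ coordinates and obtains strictness from the existence of at least one non-incremented term (since $N-M<n$), whereas you place increments only on coordinates with positive fractional part so that every term is strictly below $1/N$.
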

\begin{proof}
  For each $i\in\{1,\dots,n\}$, let $m_i$ be the unique natural number
  such that $x_i\in[m_i/N,(m_i+1)/N)$, or equivalently, let $m_i$ be
  the largest natural number such that $m_i/N\le x_i$. Define
  $M=\sum_{i=1}^n m_i$. Then, $M\le N< M+n$ since
  $M/N=|\vec{m}/N|_1\le|\vec{x}|_1=1$ and $(M+n)/N=|(\vec{m}+1)/N|_1>|\vec{x}|_1=1$. Define
  \begin{equation*}
    e_i=
    \begin{cases}
      1&\text{if }i\in\{1,\dots,N-M\} \\
      0&\text{if }i\in\{N-M+1,\dots,n\}
    \end{cases}
  \end{equation*}
  and let $\vec{y}=(\vec{m}+\vec{e})/N$. Note that
  $\vec{y}\in\Delta^n_N$ because
  $|\vec{y}|_1=|\vec{m}+\vec{e}|_1/N=(M+(N-M))/N=1$. Finally,
  \begin{equation*}
    |\vec{x}-\vec{y}|_1
    =
    \sum_{i=1}^{N-M}
    |x_i-\tfrac{m_i+1}{N}|
    +
    \sum_{i=N-M+1}^{n}
    |x_i-\tfrac{m_i}{N}|
    < n/N
  \end{equation*}
  as $|x_i-\frac{m_i+1}{N}|\le 1/N$ and $|x_i-\frac{m_i}{N}|< 1/N$.
\end{proof}

\bibliographystyle{amsplainurl}
\bibliography{all}

\end{document}